\documentclass[10pt]{article}
\usepackage[dvipsnames]{xcolor}
\usepackage[accepted]{tmlr}


\usepackage{amsmath,amsfonts,bm}









\def\eqref#1{equation~\ref{#1}}









\def\1{\bm{1}}










\DeclareMathAlphabet{\mathsfit}{\encodingdefault}{\sfdefault}{m}{sl}
\SetMathAlphabet{\mathsfit}{bold}{\encodingdefault}{\sfdefault}{bx}{n}













\usepackage{hyperref}
\usepackage{url}

\usepackage{array,multirow,graphicx}

\usepackage{amsmath}
\usepackage{algorithm}
\usepackage{algpseudocode}
\usepackage{graphicx}
\usepackage{subcaption}
\usepackage{caption}
\usepackage{lipsum}  
\usepackage{mwe}
\usepackage{amssymb}
\usepackage{appendix}
\usepackage{enumitem}
\usepackage{ulem}
\usepackage[margin=1in]{geometry} 
\usepackage{booktabs} 
\usepackage{siunitx} 
\usepackage[margin=1in]{geometry} 
\definecolor{lightgray}{gray}{0.9}
\usepackage{xcolor}
\usepackage{tabularray}
\usepackage{amsthm}
\newtheorem{thm}{Theorem}[section]

\title{Scaling Up Bayesian Neural Networks with Neural Networks}

\author{\name Zahra Moslemi \email zmoslemi@uci.edu \\
       \addr Department of Statistics\\
       University of California\\
       Irvine, CA, USA
       \AND
       \name Yang Meng \email mengy13@uci.edu \\
       \addr Department of Statistics\\
       University of California\\
       Irvine, CA, USA
       \AND
       \name Shiwei Lan \email slan7@asu.edu \\
       \addr School of Mathematical and Statistical Sciences\\
       Arizona State University\\
       Tempe, AZ, USA
       \AND
       \name Babak Shahbaba \email babaks@uci.edu \\
       \addr Department of Statistics\\
       University of California\\
       Irvine, CA, USA}


\begin{document}

\maketitle

\begin{abstract}
Bayesian Neural Networks (BNNs) offer a principled and natural framework for proper uncertainty quantification in the context of deep learning. They address the typical challenges associated with conventional deep learning methods, such as data insatiability, ad-hoc nature, and susceptibility to overfitting. However, their implementation typically either relies on Markov chain Monte Carlo (MCMC) methods, which are characterized by their computational intensity and inefficiency in a high-dimensional space, or variational inference methods, which tend to underestimate uncertainty. To address this issue, we propose a novel Calibration-Emulation-Sampling (CES) strategy to significantly enhance the computational efficiency of BNN. In this framework, during the initial calibration stage, we collect a small set of samples from the parameter space. These samples serve as training data for the emulator, which approximates the map between parameters and posterior probability. The trained emulator is then used for sampling from the posterior distribution at substantially higher speed compared to the standard BNN. Using simulated and real data, we demonstrate that our proposed method improves computational efficiency of BNN, while maintaining similar performance in terms of prediction accuracy and uncertainty quantification. 
\end{abstract}

\section{Introduction}

In recent years, Deep Neural Networks (DNN) have emerged as the predominant driving force in the field of machine learning and are regarded as fundamental tools for many intelligent systems \citep{cheng2018,lecun1998,sze2017}. While DNN have demonstrated significant success in prediction tasks, they often struggle with accurately quantifying uncertainty. Additionally, due to their vulnerability to overfitting, they can generate highly confident yet erroneous predictions \citep{su2019, kwon2022}. In recent years, there have been some attempts to address this issue. For example, the Ensemble Deep Learning method \citep{lakshminarayanan2017} aggregates predictions from multiple models to improve reliability and uncertainty estimates. While these methods represent important progress in the right direction, developing a principled and computationally efficient framework for Uncertainty Quantification (UQ) within the context of deep learning remains a significant challenge and active area of research. This is especially important in domains
where critical decisions, such as medical diagnostics, are involved. To address these issues, Bayesian Neural Networks (BNNs) \citep{mackay1992, neal2012bayesian, jospin2022} have emerged as an alternative to standard DNN, providing a more reliable framework within the field of machine learning. Their intrinsic ability to capture and quantify uncertainties in predictions establishes a robust foundation for decision-making under uncertainty. However, Bayesian inference in high-dimensional BNN poses significant computational challenges due to the inefficiency of traditional Markov Chain Monte Carlo (MCMC) methods. In fact, not only BNN, but almost all traditional Bayesian inference methods relying on MCMC techniques are known for their computational intensity and inefficiency when dealing with high-dimensional problems. Variational inference \citep{jordan1999introduction} methods have been proposed to speed up computation by approximating the posterior distribution, but they tend to underestimate uncertainty \citep{minka2005}. Consequently, researchers have proposed various approaches to expedite the inference process \citep{welling2011bayesian, shahbabaSplitHMC, AhnShahbabaWelling14, Hoffman14a, beskos2017, cui2016, Zhang2017b, Zhang2018, li2019neural}. Here, we focus on a state-of-the-art approach, called Calibration-Emulation-Sampling (CES) \citep{Cleary2021}, which has shown promising results in large-dimensional UQ problems such as inverse problems \citep{Lan2022Siam}. CES involves the following three steps:
\begin{description}[itemsep=0pt,parsep=6pt]
\item[(i)] Calibrate models to collect sample parameters and their corresponding expensive evaluation of posterior probability for the emulation step; 
\item[(ii)] Emulate the parameter-to-posterior map using the samples from Step (i); and 
\item[(iii)] Generate posterior samples using MCMC based on the trained emulator at substantially lower cost. 
\end{description}

This framework allows for reusing expensive forward evaluations from parameters to posterior probability and offers a computationally efficient alternative to existing MCMC procedures.

The standard CES method \citep{Cleary2021} focuses on UQ in inverse problems and uses Gaussian Process (GP) models for the emulation component. GP models have a well-established history of application in emulating computer models \citep{currin1988}, conducting uncertainty analyses \citep{Oakley2002}, sensitivity assessments \citep{Jermey2023}, and calibrating computer codes \citep{Kennedy2002,Higdon2004}. Despite their versatility, GP-based emulators are computationally intensive, with a complexity of O($N^3$) using the squared-exponential kernel, where $N$ is the sample size. Lower computational complexity can be achieved using alternative kernels \citep{lan15} or various computational techniques \citep{liu2020,bonilla2007,gardner2018,seeger2003}. Nevertheless, scaling up GP emulators to high-dimensional problems remains a limiting factor. Furthermore, the prediction accuracy of GP emulators highly depends on the quality of the training data, emphasizing the importance of rigorous experimental design. 
To address these issues, \citet{Lan2022Siam} proposed an alternative CES scheme called Dimension-Reduced Emulative Autoencoder Monte Carlo (DREAMC) method, which uses Convolutional Neural Networks (CNN) as emulator. DREAMC improves and scales up the application of the CES framework for Bayesian UQ in inverse problems from hundreds of dimensions (with GP emulation) to thousands of dimensions (with CNN emulation). Here, we adopt a similar approach and propose a new method, called Fast BNN (FBNN), for Bayesian inference in neural networks. We use DNN for the emulation component of our CES scheme. DNN has proven to be a powerful tool in a variety of applications and offers several advantages over GP emulation \citep{Lan2022Siam, dargan2020}. It is computationally more efficient and suitable for high-dimensional problems. The choice of DNN as an emulator enhances computational efficiency and flexibility.

Besides the computational challenges of building emulators, efficiently sampling from posterior distributions using these emulators also presents a significant challenge due to the high dimensionality of the target distribution. Traditional Metropolis-Hastings algorithms, typically defined on finite-dimensional spaces, encounter diminishing mixing efficiency as the dimensions increase \citep{gelman1997, roberts1998, beskos2009}. To overcome this inherent drawback, a novel class of {dimension-independent} MCMC methods has emerged, operating within infinite-dimensional spaces \citep{beskos2014,beskos2009,beskos2011,cotter2013,law2014,beskos2014,beskos2017}. More specifically, we use the Preconditioned Crank-Nicolson (pCN) algorithm. The most significant feature of pCN is its dimension robustness, which makes it well-suited for high-dimensional sampling problems. The pCN algorithm is well-defined, with non-degenerate acceptance probability, even for target distributions on infinite-dimensional spaces. As a result, when pCN is implemented on a real-world computer in large but finite dimension $N$, the convergence properties of the algorithm are independent of $N$. This is in strong contrast to schemes such as Gaussian random-walk Metropolis-Hastings and the Metropolis-adjusted Langevin algorithm, whose acceptance probability degenerates to zero as $N$ goes to infinity. 

In summary, this paper addresses the critical challenges of UQ in high-dimensional BNN. By incorporating deep neural networks for emulation and leveraging the dimension-robust pCN algorithm for sampling, this research significantly enhances computational efficiency and scalability in Bayesian uncertainty quantification, offering a robust counterpart to DNN, and a scalable counterpart to BNN.
Through extensive experiments, we demonstrate the feasibility and effectiveness of utilizing FBNN to accelerate Bayesian UQ in high-dimensional neural networks. 

\section{Related Methods}
\label{sec:methods}
Various MCMC methods have been employed to explore complex probability distributions for Bayesian inference. In this section, we discuss some of the main MCMC algorithms related to our work. Additionally, we discuss a variety of state-of-the-art methods utilized in our numerical experiments, which extend beyond MCMC frameworks. These include Ensemble Deep Learning for Neural Networks \citep{perrone1995networks}, BNNs with Variational Inference \citep{jaakkola2000bayesian}, BNNs leveraging Lasso Approximation \citep{mackay1992}, Monte Carlo Dropout (MC-Dropout) \citep{gal2016dropout}, Stochastic Weight Averaging-Gaussian (SWAG) \citep{maddox2019simple}, and Accelerated Hamiltonian Monte Carlo (HMC) \citep{Zhang2017b}. These techniques offer a comprehensive spectrum for evaluating our FBNN model.














\subsection{Hamiltonian Monte Carlo (HMC)}
MCMC methods are designed for sampling from intractable probability distributions. The fundamental principle involves constructing a Markov chain whose equilibrium distribution coincides with the target distribution. Various algorithms exist for constructing such Markov chains, the simplest of which is the Metropolis-Hastings (MH) algorithm \citep{metropolis1953equation, hastings70}, which relies on a random walk to explore the parameter space.
Hamiltonian Monte Carlo (HMC) is a special case of the MH algorithm that incorporates Hamiltonian dynamics evolution and auxiliary momentum variables \citep{neal2011mcmc}. Compared to using a Gaussian random walk proposal distribution in the MH algorithm, HMC reduces the correlation between successive sampled states by proposing moves to distant states that maintain a high probability of acceptance due to the approximate energy conserving properties of the simulated Hamiltonian dynamic. The reduced correlation means fewer Markov chain samples are needed to approximate integrals with respect to the target probability distribution for a given Monte Carlo error.

\subsection{Stochastic Gradient Hamiltonian Monte Carlo (SGHMC)}

As discussed earlier, HMC sampling methods provide a mechanism for defining distant proposals with high acceptance probabilities in a Metropolis-Hastings framework, enabling more efficient exploration of the state space than standard random-walk proposals. However, a limitation of HMC methods is the required gradient computation for simulation of the Hamiltonian dynamical system; such computation is infeasible in problems involving a large sample size. Stochastic Gradient Hamiltonian Monte Carlo (SGHMC) \citep{chen2014} addresses computational inefficiency by using a noisy but unbiased estimate of the gradient computed from a mini-batch of the data. SGHMC is an effective method for Bayesian inference, particularly when dealing with large datasets, as it leverages stochastic gradients and hyperparameter adaptation to efficiently explore high-dimensional target distributions. 

\subsection{Random Network Surrogate–HMC (RNS–HMC)}

Alternatively, we can reduce the computational cost of HMC by constructing surrogate Hamiltonians. As an example, the random network surrogate–HMC (RNS–HMC) method \citep{Zhang2017b} uses random non-linear bases to approximate posterior distributions. The goal is to explore and exploit the structure and regularity in parameter space for the underlying probabilistic model and construct an effective approximation of its geometric properties. To achieve this, RNS–HMC starts by identifying suitable bases that can capture the complex geometric properties of the parameter space. Through an optimization process, these bases are then used to form a surrogate function to approximate the expensive Hamiltonian dynamics. Unlike traditional HMC, which requires repeated evaluation of the model and its derivatives, RNS-HMC leverages the surrogate function to perform leapfrog integration steps, leading to a substantially lower computational cost. Later, \citet{li2019neural} extended this idea by using a neural network to directly approximate the gradient of the Hamiltonian.

\subsection{Preconditioned Crank-Nicolson (pCN)}
\label{subsec:pcn}

Preconditioned Crank-Nicolson \citep{da2014} is a variant of MCMC that incorporates a preconditioning matrix for adaptive scaling \citep{cotter2013}.
It involves re-scaling and random perturbation of the current state, incorporating prior information. Despite the Gaussian prior assumption, the approach adapts to cases where the posterior distribution may not be Gaussian but is absolutely continuous with respect to an appropriate Gaussian density. This adaptation is achieved through the Radon-Nikodym derivative, connecting the posterior distribution with the dominating Gaussian measure, often chosen as the prior. The algorithmic foundation of pCN lies in using stochastic processes that preserve either the posterior or prior distribution. These processes serve as proposals for Metropolis-Hastings methods with specific discretizations, ensuring preservation of the Gaussian reference measure. The key steps of the pCN algorithm are outlined in Algorithm \ref{alg:pcn}.

\begin{algorithm}[t]
\caption{Preconditioned Crank-Nicolson (pCN) Algorithm}
\label{alg:pcn}
\begin{algorithmic}
    \State \textbf{Notation:}
    \State \hspace{\algorithmicindent} $k$: iteration index.
    \State \hspace{\algorithmicindent} $u^{(k)}$: state of the algorithm at iteration $k$.
    \State \hspace{\algorithmicindent} $\xi^{(k)} \sim \mathcal{N}(0,C)$: Gaussian noise with covariance matrix $C$ at iteration $k$.
    \State \hspace{\algorithmicindent} $\beta$: parameter controlling the proposal step size.
    \State \hspace{\algorithmicindent} $\Phi$: potential function related to the target probability distribution.
    \State \hspace{\algorithmicindent} $a(u^{(k)}, v^{(k)})$: acceptance probability for the proposed state $v^{(k)}$.
    \State
    \State \textbf{Algorithm Steps:}
    \State Initialize iteration counter $k \gets 0$.
    \State Choose an initial state $u^{(0)}$.
    \Repeat
        \State Generate noise $\xi^{(k)}$ with $\xi^{(k)} \sim \mathcal{N}(0,C)$.
        \State Propose $v^{(k)} \gets \sqrt{1-\beta^2}u^{(k)} + \beta \xi^{(k)}$.
        \State Calculate acceptance probability $a(u^{(k)}, v^{(k)}) \gets \min\{1, \exp(\Phi(u^{(k)}) - \Phi(v^{(k)}))\}$.
        \If{random number $\leq a(u^{(k)}, v^{(k)})$}
            \State Set $u^{(k+1)} \gets v^{(k)}$. 
        \Else
            \State Set $u^{(k+1)} \gets u^{(k)}$.
        \EndIf
        \State Increment iteration counter $k \gets k + 1$.
    \Until{a stopping criterion (fixed number of iterations or convergence) is met.}
\end{algorithmic}
\end{algorithm}

\subsection{Variational Inference}

The concept of variational inference has been applied in various forms to probabilistic models. The technique offers a way to approximate posterior distributions in Bayesian models \citep{jordan1999introduction}. The approximate distribution allows for a more feasible inference, especially for complex models like neural networks. In the context of BNNs, variational inference was brought into focus by \citet{hinton1993keeping}, which, while not explicitly termed as such in the modern sense, laid the groundwork for later developments. A more direct application of variational inference to BNNs was detailed in later works \citep[e.g., ][]{graves2011practical}. More recently, \citet{Kingma2014VAE}, \citet{rezende2014stochastic} and \citet{blundell2015weight} significantly contributed to popularizing and advancing the use of variational inference in deep learning and Bayesian neural networks through the introduction of efficient gradient-based optimization techniques.

Algorithm \ref{alg:variational_inference_BNN} succinctly shows the iterative process of optimizing the parameters of a variational distribution to approximate the posterior distribution of a BNN's weights. Through the alternation of expectation (E-Step) and maximization (M-Step) phases, it aims to minimize the difference between the variational distribution and the true posterior, leveraging the Evidence Lower Bound (ELBO) \citep{jordan1999introduction} as a tractable surrogate objective function. This approach enables the practical application of Bayesian inference to neural networks, facilitating the quantification of uncertainty in predictions and model parameters. 

\begin{algorithm}[t]
  \caption{Variational Inference in Bayesian Neural Networks (BNNs)}
  \label{alg:variational_inference_BNN}
  \begin{algorithmic}
    \State \textbf{Initialization:}
    \State Choose an initial variational distribution $q_{\theta}(W)$ for the weights $W$ of BNN, parameterized by $\theta$.
    \State Define the prior distribution $p(W)$ over the weights.

    \State

    \While{not converged}
      \State \textbf{E-Step:} Estimate the Expectation of the log-likelihood over the variational distribution.
      \State Compute the gradient of the ELBO (Evidence Lower BOund) with respect to $\theta$, where
      \[
        \text{ELBO}(\theta) = \mathbb{E}_{q_{\theta}(W)}[\log p(Y|X,W)] - \text{KL}[q_{\theta}(W) || p(W)]
      \]
      \State Here, $X$ and $Y$ are the inputs and outputs of the dataset, respectively, and $\text{KL}$ denotes the Kullback-Leibler divergence between the variational distribution and the prior.

      \State

      \State \textbf{M-Step:} Maximize the ELBO with respect to $\theta$ using gradient ascent:
      \[
        \theta \leftarrow \theta + \eta \nabla_{\theta} \text{ELBO}(\theta)
      \]
      \State where $\eta$ is the learning rate.

    \EndWhile

    \State

    \State \textbf{Output:} Variational distribution $q_{\theta}(W)$ approximating the posterior distribution $p(W|X,Y)$.

  \end{algorithmic}
\end{algorithm}

\subsection{Laplace Approximation}

Previous studies have shown that in the context of BNNs, the Laplace approximation serves as an efficient method for approximating the posterior distribution over the network's weights \citep{arbel2023primer, blundell2015weight}. At the core of the Laplace approximation is the assumption that, around the loss function's minimum, the posterior distribution of the network's weights can be approximated by a Gaussian distribution. This is achieved by finding the mode of the posterior, called the Maximum A Posteriori (MAP; equivalent to the minimum of the loss function in the Bayesian framework), and then approximating the curvature of the loss surface at this point using the Hessian matrix \citep{liang2018bayesian}. The inverse of this Hessian is used to define the covariance of the Gaussian posterior, thus simplifying the representation of uncertainty in the predictions. More specifically in BNNs, this approach can be used to approximate the posterior distribution of the weights given the observed data.

\subsection{Monte Carlo Dropout}
Monte Carlo (MC) Dropout \citep{gal2016dropout} was introduced as a Bayesian approximation method to quantify model uncertainty in deep learning. The core idea behind this method is to interpret dropout, a technique commonly used to prevent overfitting in neural networks, from a Bayesian perspective. Normally, dropout randomly disables a fraction of neurons during the training phase to improve generalization. However, when viewed through the Bayesian lens, dropout can be seen as a practical way to approximate Bayesian inference in deep networks. This approximation allows the network to estimate not just a single set of weights, but a distribution over them, enabling the model to express uncertainty in its predictions. The MC Dropout technique involves running multiple forward passes through the network with dropout enabled. Each forward pass generates a different set of predictions due to the random omission of neurons, leading to a distribution of outputs for a given input.

\subsection{Stochastic Weight Averaging-Gaussian (SWAG)}
Building on the idea of Stochastic Weight Averaging (SWA) \citep{izmailov2018, maddox2019simple}, SWAG approximates the distribution of model weights by a Gaussian distribution, leveraging the empirical weight samples collected from training. This approach allows for a more nuanced understanding of the model's uncertainty compared to SWA, which simply averages weights over the latter part of the training process.
SWAG involves collecting a set of weights $\{W_i\}_{i=1}^N$ over the last $N$ epochs of training, where $W_i$ represents the weight vector at epoch $i$. The mean $\mu$ of the Gaussian distribution is then computed as the simple average of these weights: $\mu = \frac{1}{N} \sum_{i=1}^N W_i$. To capture the covariance of the weight distribution, SWAG calculates the empirical covariance matrix: $\Sigma = \frac{1}{N-1} \sum_{i=1}^N (W_i - \mu) (W_i - \mu)^T$. This formulation assumes a diagonal, or low-rank plus diagonal, approximation of the covariance matrix to maintain computational efficiency. The resulting Gaussian distribution, characterized by $\mu$ and $\Sigma$, can then be used for uncertainty estimation and prediction by sampling weights from this distribution and averaging the predictions of the resulting models.

\section{Bayesian UQ for Neural Networks: Calibration-Emulation-Sampling}
Standard neural networks (NN) typically consist of multiple layers, starting with an input layer, denoted as $\boldsymbol{l}_0$, followed by a series of hidden layers $\boldsymbol{l}_l$ for $l = 1, \ldots, m-1$, and ending with an output layer $\boldsymbol{l}_m$. In this architectural framework, comprising a total of $m+1$ layers, each layer $\boldsymbol{l}$ is characterized by a linear transformation, which is subsequently subjected to a nonlinear operation $g$, commonly referred to as an activation function \citep{jospin2022}:
\begin{eqnarray}\label{eq:NN-structure}
\boldsymbol{l}_0 & = & \boldsymbol{X}, \nonumber \\ 
\boldsymbol{l}_l & = & g_l\left(\boldsymbol{W}_l \boldsymbol{l}_{l-1} + \boldsymbol{b}_l\right) \quad \text{for all } l \in \{1,\cdots, m-1\}, \\
\boldsymbol{l}_m & = & \boldsymbol{Y}. \nonumber
\end{eqnarray}
Here, $\boldsymbol{\theta}=(\boldsymbol{W}, \boldsymbol{b})$ are the parameters of the network, where $\boldsymbol{W}$ are the weights of the network connections and $\boldsymbol{b}$ are the biases. A given NN architecture represents a set of functions isomorphic to the set of possible parameters $\boldsymbol{\theta}$. Deep learning is the process of estimating the parameters $\boldsymbol{\theta}$ from the training set $\boldsymbol{(X, Y):=\left\{\left(x_{n}, y_{n}\right)\right\}_{n=1}^{N}}$ composed of a series of input $\boldsymbol{X}$ and their corresponding labels $\boldsymbol{Y}$. Based on the training set, a neural network is trained to optimize network parameters $\boldsymbol{\theta}$ in order to map $\boldsymbol{X} \rightarrow \boldsymbol{Y}$ with the objective of obtaining the maximal accuracy (under certain loss function $L(\cdot))$. Considering the error, we can write NN as a forward mapping, denoted as $\mathcal{G}$, that maps each parameter vector $\boldsymbol{\theta}$ to a function that further connects $\boldsymbol{X}$ to $\boldsymbol{Y}$ with small errors $\varepsilon_{n}$ :
\begin{eqnarray}
\mathcal{G}: \Theta \rightarrow \boldsymbol{Y}^{\boldsymbol{X}}, \quad \boldsymbol{\theta} \mapsto \mathcal{G}(\boldsymbol{\theta})
\end{eqnarray}
More specifically,
\begin{eqnarray}
\mathcal{G}(\boldsymbol{\theta}): \boldsymbol{X} \rightarrow \boldsymbol{Y}, \quad \boldsymbol{y_{n}} = \boldsymbol{\hat{y}_{n}} + \boldsymbol{\varepsilon_{n}}, \quad \boldsymbol{\hat{y}_{n}} = \mathcal{G}(\boldsymbol{x_n}; \theta), \quad \boldsymbol{\varepsilon} \sim \boldsymbol{N(0, \Gamma)}
\end{eqnarray}
where $\varepsilon$ represents random noise capturing disparity between the predicted and actual observed values in the training data. Here, $\boldsymbol{Y}$ is a continuous random variable in regression problems, or a continuous {\it{latent}} variable in classification problems.

To train NN, stochastic gradient algorithms could be used to solve the following optimization problem:
\begin{eqnarray}
\boldsymbol{\theta}^{*} &=& \arg \min_{\boldsymbol{\theta} \in \Theta} L(\boldsymbol{\theta}; \boldsymbol{X}, \boldsymbol{Y}) = \arg \min_{\boldsymbol{\theta} \in \Theta} L(\boldsymbol{Y} - \mathcal{G}(\boldsymbol{X}; \theta)) \nonumber
\end{eqnarray}
For example, the loss function $L(\boldsymbol{\theta}; \boldsymbol{X}, \boldsymbol{Y})$ can be defined in terms of the negative log-likelihood function $\Phi$ as follows:
\begin{eqnarray}
\Phi(\boldsymbol{\theta}; \boldsymbol{X}, \boldsymbol{Y}) &=& \frac{1}{2}\|\boldsymbol{Y} - \mathcal{G}(\boldsymbol{X};\boldsymbol{\theta})\|_{\Gamma}^2
\end{eqnarray}

The point estimate approach, which is the traditional approach in deep learning, is relatively straightforward to implement with modern algorithms and software packages, but tends to lack proper uncertainty quantification \citep{chuan2017,nixon2019}. To address this issue, stochastic neural networks, which incorporate stochastic components in the network, have emerged as a standard solution. This is performed by giving the network either stochastic activation functions or stochastic weights to simulate random samples for $\boldsymbol{\theta}$. The integration of stochastic components into neural networks allows for an extensive exploration of model uncertainty, which can be approached through Bayesian methods among others. It should be noted that not all neural networks that represent uncertainty are Bayesian or even stochastic; some employ deterministic methods to estimate uncertainty without relying on stochastic components or Bayesian inference \citep{lakshminarayanan2017,sensoy2018Zhang2017b}. Bayesian neural networks (BNN) represent a subset of stochastic neural networks where Bayesian inference is specifically used for training, offering a rigorous probabilistic interpretation of model parameters \citep{mackay1992, neal2012bayesian}. The primary objective is to gain a deeper understanding of the uncertainty that underlies the specific process the network is modeling. 

To design a BNN, we put a prior distribution over the model parameters, $p(\boldsymbol{\theta})$. By applying Bayes' theorem, the posterior probability can be written as:
\begin{eqnarray}
p(\boldsymbol{\theta} \mid X,Y) & = & \frac{p\left(Y \mid X, \boldsymbol{\theta}\right) p(\boldsymbol{\theta})}{\int_{\boldsymbol{\theta}} p\left(Y \mid X, \boldsymbol{\theta}^{\prime}\right) p\left(\boldsymbol{\theta}^{\prime}\right) d \boldsymbol{\theta}^{\prime}} \\ 
& \propto & p\left(Y \mid X, \boldsymbol{\theta}\right) p(\boldsymbol{\theta}).
\end{eqnarray}

BNN is usually trained using MCMC algorithms. Because we typically have big amount of data, the likelihood evaluation tends to be expensive. One common approach to address this issue is subsampling, which restricts the computation to a subset of the data \citep[see for example,][]{hoffmann10, welling2011bayesian, chen2014}. The assumption is that there is redundancy in the data and an appropriate subset of the data can provide a good enough approximation of the information provided by the full data set. In practice, it is a challenge to find good criteria and strategies for an effective subsampling in many applications. Additionally, subsampling could lead to a significant loss of accuracy \citep{betancourt2015}.

\subsection{Fast Bayesian Neural Network (FBNN).}

\begin{algorithm}[t!]
\caption{Fast Bayesian Neural Network (FBNN)}\label{alg:fbnn}
\begin{algorithmic}
\State \textbf{Input:} Training set $\{(X_n, Y_n)\}_{n=1}^N$, Prior $p(\boldsymbol{\theta})$
\State \textbf{Output:} Posterior samples for model parameters
\Procedure{FBNN}{$\{(X_n, Y_n)\}_{n=1}^N$, $p(\boldsymbol{\theta})$}
    \State \textbf{Calibration Step:}
    \State Initialize model parameters $\boldsymbol{\theta}$ using SGHMC
    \State Save posterior samples $\{\boldsymbol{\theta}_n^{(j)}\}_{j=1}^J$ and the corresponding 
    $\{ \mathcal{G}_{\boldsymbol{\theta}_n^{(j)}}(\boldsymbol{X_{n}})\}_{j=1}^{J}$ after a few iterations
    
    \State \textbf{Emulation Step:}
    \State Build an emulator of the forward mapping $\mathcal{G}^{e}$ based on $\{\boldsymbol{\theta}_n^{(j)},\mathcal{G}_{\boldsymbol{\theta}_n^{(j)}}(\boldsymbol{X_{n}})\}_{j=1}^{J}$ using a DNN as the emulator
    
    \State \textbf{Sampling Step:}
    Run approximate MCMC, particularly $pCN$, based on the emulator to propose $\theta^\prime$ from $\theta$.

\EndProcedure
\end{algorithmic}
\end{algorithm}

We propose an alternative approach that explores smoothness or regularity in parameter space, a characteristic common to most statistical models. Therefore, one would expect to find good and compact forms of approximation of functions (e.g., likelihood function) in parameter space. Sampling algorithms can use these approximate functions, also known as ``surrogate'' functions, to reduce their computational cost. More specifically, we propose using the CES scheme for high-dimensional BNN problems in order to bypass the expensive evaluation of original forward models and reduce the cost of sampling to a small computational overhead. 
Compared with MCMC methods, which require repeatedly evaluating the original (large) NN for the likelihood given the data, the proposed method builds a smaller NN emulator that bypasses the data (i.e., cuts out the middleman) by mapping the parameters directly to the likelihood function, thus avoiding costly evaluations. That is, the emulator is trained based on the parameter-likelihood pairs, which are collected through few iterations of the original BNN. In contrast to subsampling methods, this approach can handle computationally intensive likelihood functions, whether the computational cost is due to high-dimensional data or complex likelihood function (e.g., models based on differential equations). 
Additionally, the calibration process increases the efficiency of MCMC algorithms by providing a robust initial point in the high-density region.
Algorithm \ref{alg:fbnn} shows how our proposed method, called Fast Bayesian Neural Networ (FBNN), combines the strengths of BNN in uncertainty quantification, SGHMC for efficient parameter calibration, and the pCN method for sampling. More details are provided in the following sections. 


\subsection{Calibration -- Early stopping in Bayesian Neural Network}
By ``calibration'' we mean collecting an optimal sample of parameters to build an emulator with a reasonable level of accuracy. This is aligned with traditional calibration goals of balancing accuracy and reliability, but within a new context. Here, the calibration step involves an early stopping strategy, aimed at collecting a targeted set of posterior samples without fully converging to the target distribution. 
More specifically, we use the Stochastic Gradient Hamilton Monte Carlo (SGHMC) algorithm for a limited number of iterations to collect a small set of samples. These samples include both the model parameters ($\boldsymbol{\theta}^{(j)}$) and the outputs predicted by the model ($\mathcal{G}(\boldsymbol{X};\boldsymbol{\theta}^{(j)})$) for each sample $j$ out of a total of $J$ samples. The key focus of this training phase is not to obtain a precise approximation of the target posterior distribution, but rather collecting a small number of posterior samples as the ``training data'' for the subsequent emulation step.  
The SGHMC algorithm plays a crucial role in efficiently handling large datasets and collecting essential samples during the calibration step of the FBNN. Its ability to introduce controlled stochasticity in updates proves instrumental in preventing local minima entrapment, thereby providing a comprehensive set of posterior samples that capture the variability in the parameter space.


\subsection{Emulation -- Deep Neural Network (DNN)}

The original forward mapping in BNN involves mapping input dataset \(X\) to the response variable \(Y\). For the likelihood evaluation using original forward mapping, it is necessary to calculate the likelihood $L(\boldsymbol{\theta} ; X, Y)$ for each sample of model parameters. This means that with each iteration, when a new set of model parameters is introduced, the original forward mapping needs to be applied to generate output predictions, followed by the calculation of the likelihood. In general, this process can be very time-consuming. If, however, we have a small set of estimated model parameters along with their corresponding predicted outputs collected during the calibration step, an emulator can be trained to eliminate the intermediary step (passing through each data point), allowing us to map the parameters directly to the likelihood function. This leads to a computationally efficient likelihood evaluation. Therefore, to address the computational challenges of evaluating the likelihood with large datasets, we build an emulator $\mathcal{G}^e$ using the recorded pairs $\{\boldsymbol{\theta}^{(j)},\boldsymbol{\hat{y}}^{(j)}=\mathcal{G}(\boldsymbol{X};\boldsymbol{\theta}^{(j)})\}_{j=1}^J$ obtained during the calibration step. More specifically, these input-output pairs are used to train a DNN model as an emulator $\mathcal{G}^e$ of the forward mapping $\mathcal{G}$:
\begin{eqnarray}
\mathcal{G}^{e}(\boldsymbol{X};\boldsymbol{\theta}) &=& \text{DNN}(\boldsymbol{\theta}, \mathcal{G}(\boldsymbol{X};\boldsymbol{\theta}))  = F_{K-1} \circ \cdots \circ F_0(\boldsymbol{\theta}), \\
F_k(\cdot) &=& g_k(W_k \cdot + b_k) \in C\left(\mathbb{R}^{d_k}, \mathbb{R}^{d_{k+1}}\right)
\end{eqnarray}
Given a DNN model where $\boldsymbol{\theta}$ represents the input and $\mathcal{G}(\boldsymbol{X};\boldsymbol{\theta})$ denotes the output, we set the dimensions as $d_0=d$ and $d_K=D$, where \( d \) represents the dimension of the input parameter vector \( \boldsymbol{\theta} \), and \( D \) represents the dimension of the output \( \mathcal{G}(\boldsymbol{X}; \boldsymbol{\theta}) \). 
Here, the matrices $W_k$ are defined in the space $\mathbb{R}^{d_{k+1} \times d_k}$ and the vectors $b_k$ in $\mathbb{R}^{d_{k+1}}$. The functions $g_k$ act as (continuous) activation mechanisms. In the context of our numerical examples, the activation functions for the DNN emulator are selected to ensure that both the function approximations and their derived gradients have minimized errors. This involves a grid search over a
predefined set of activation functions to ensure that the network efficiently approximates the target functions and their gradients.

After the emulator is trained, the log-likelihood can be approximated as follows:
\begin{eqnarray}
L(\boldsymbol{\theta} ; X, Y) \approx L^e(\boldsymbol{\theta} ; X, Y) = L(Y - \mathcal{G}^{e}(X;\boldsymbol{\theta}))
\end{eqnarray}
By combining the approximate likelihood $L^e(\boldsymbol{\theta} ; X, Y)$ with the prior probability $p(\boldsymbol{\theta})$, an approximate posterior distribution can be obtained. 
Similarly, we could approximate the potential function using the predictions from DNN:
\begin{eqnarray}
\Phi(\boldsymbol{\theta}; \boldsymbol{X}, \boldsymbol{Y}) \approx \Phi^e(\boldsymbol{\theta}; \boldsymbol{X}, \boldsymbol{Y}) = \frac{1}{2}\|\boldsymbol{Y} - \mathcal{G}^{e}(X;\boldsymbol{\theta})\|_{\Gamma}^2
\end{eqnarray}

Building upon the foundational concepts of using a DNN emulator \(\mathcal{G}^e\)  for approximating the forward mapping function \(\mathcal{G}\), we further elaborate on the implications and advantages of this approach for Bayesian inference, particularly in the context of handling large datasets and/or complex likelihood functions. The emulation step, which involves training the DNN emulator with input-output pairs \(\{\boldsymbol{\theta}^{(j)},\mathcal{G}(\boldsymbol{X};\boldsymbol{\theta}^{(j)})\}\), serves as a critical phase where the emulator learns to mimic the behavior of the original model with high accuracy. The utilization of DNN emulator to approximate the likelihood function in Bayesian inference presents a significant computational advantage over the direct use of the original BNN likelihood. This advantage stems primarily from the inherent differences in computational complexity between evaluating the  the likelihood with a DNN emulator -- which takes a set of model parameters as input and yields predicted responses—and the original BNN model -- which processes $\boldsymbol{X}$ as input to produce the response variable.

In the sampling stage, the computational complexity could be significantly reduced if we use $\Phi^e$ instead of $\Phi$ in the accept/reject step of MCMC. If the emulator is a good representation of the forward mapping, the difference between $\Phi^e$ and $\Phi$ would be small and negligible. Then, the samples by such emulative MCMC have the stationary distribution that closely follows the true posterior distribution. This approach not only ensures that the sampling process is computationally feasible, but also maintains the integrity of the stationary distribution, closely approximating the true posterior distribution with minimal discrepancy. The integration of DNN emulators into the Bayesian inference workflow thus presents a compelling solution to the computational challenges associated with evaluating likelihood functions in complex models.

\subsection{Sampling -- Preconditioned Crank-Nicolson (pCN)}
In the context of the FBNN method, the sampling step is crucial for approximating the posterior distribution efficiently. The method employs MCMC algorithms based on a trained emulator to achieve full exploration and exploitation. However, challenges arise, especially in high-dimensional parameter spaces, where classical MCMC algorithms often exhibit increasing correlations among samples.
To address this issue, the pCN method presented in Algorithm \ref{alg:pcn} has been used in our proposed framework as a potential solution. Unlike classical methods, pCN avoids dimensional dependence challenges, making it particularly suitable for scenarios like BNN models with a high number of weights to be inferred \citep{hairer2009}.

As explained in section \ref{subsec:pcn}, the pCN approach minimizes correlations between successive samples, a critical feature for ensuring the representativeness of the samples collected. This characteristic is vital for FBNNs, as it directly impacts the network's ability to learn from data and make robust predictions. The pCN method excels in traversing the parameter space with controlled perturbations, enhancing the algorithm's ability to capture the most probable configurations of model parameters. This focus on effective exploration around the mode contributes to a more accurate representation of the underlying neural network, ultimately improving model performance. In other words, the choice of pCN as the sampling method in FBNN is motivated by its tailored capacity to navigate and characterize the most probable regions of the parameter space. This choice reinforces the methodology's robustness and reliability, as pCN facilitates efficient sampling, leading to a more accurate and representative approximation of the posterior distribution.

To illustrate this, Figure \ref{fig:gp_sim} displays a simulation that contrasts the sampling mechanisms of SGHMC and pCN within a multimodal probability distribution. The task is to sample from a mixture of 25 Gaussian distributions, represented in panel (a), using a total of 200,000 samples. Here, the target distribution is multimodal with several distinct peaks (modes). Middle figure shows that SGHMC has explored the parameter space, although with a less concentrated sampling around the modes compared to the target distribution. This indicates that while SGHMC is effective at exploring the space, it may not capture the modes as tightly as the target distribution. In the right panel related to pCN sampler, the concentration of samples around the modes is much higher compared to SGHMC, which indicates that pCN is more effective at exploring around the modes of the distribution. Thus, we believe the combination of SGHMC and pCN in our proposed framework can complement each other for a more effective exploration of the parameter space. 

\begin{figure}[t!]
    \centering
    \includegraphics[width=0.95\textwidth]{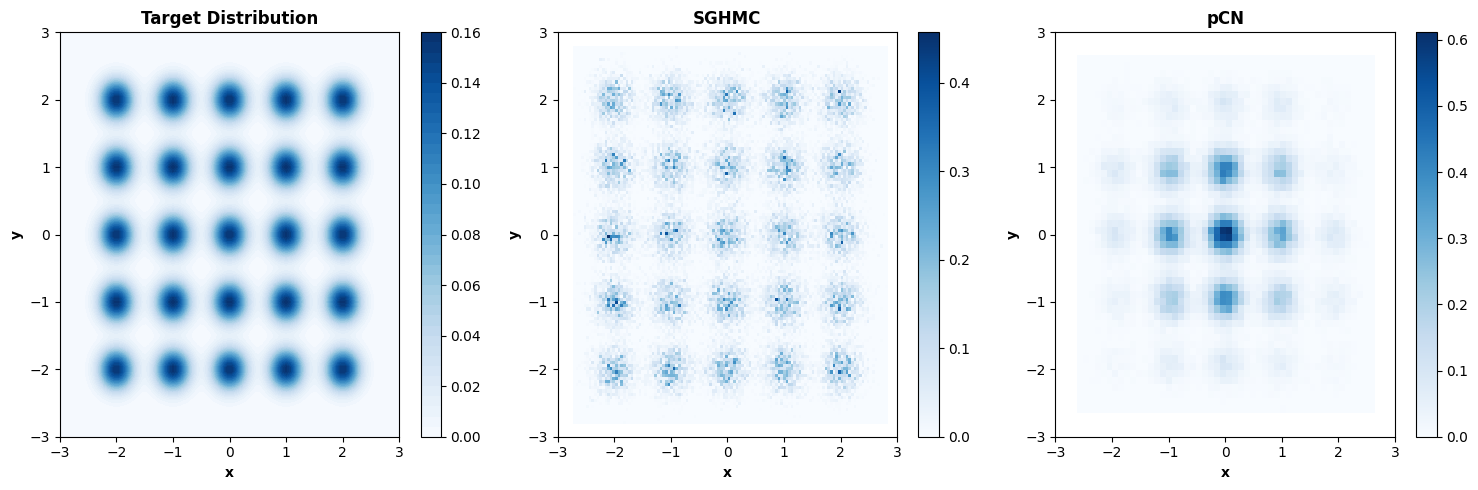}
    \caption{Sampling from a mixture of 25 Gaussians shown in (a) with 200k samples. SGHMC in (b) broadly explores the space, while pCN in (c) hones in on the high-density regions for precise mode capture.}
    \label{fig:gp_sim}
\end{figure}

\subsection{Theoretical Foundations}

In this section, we aim to quantify the error between the true potential function and its emulation in the context of the FBNN method.
Let $\Omega=(0,1)^d$ and consider forward mappings in the Sobolev space $W^{n,p}(\Omega):=\{f
\in L^p(\Omega) : D^\alpha f\in L^p(\Omega) \, \text{ for all }\, \alpha \in (\mathbb{N}\cup\{0\})^d \, \text{with} \, |\alpha|\leq n\}$ with $\Vert f\Vert_{n,p}:= \left(\sum_{0\leq |\alpha|\leq n}\Vert D^\alpha f\Vert_p^p\right)^{\frac{1}{p}}$, and $\Vert f\Vert_{n,\infty}:=\max_{0\leq |\alpha|\leq n} \Vert D^\alpha f\Vert_\infty$.
Define the Sobolev-Slobodeckij norm for $0<s<1$: $\Vert f\Vert_{s,p}:=\left(\Vert f\Vert_p^p + \int_\Omega\int_\Omega \frac{|f(x)-f(y)|^p}{|x-y|^{sp+d}}dxdy\right)^{\frac{1}{p}}$ and $\Vert f\Vert_{s,\infty}:=\max\left\{\Vert f\Vert_\infty, \mathrm{ess}\sup_{x,y\in\Omega} \frac{f(x)-f(y)}{|x-y|^s}\right\}$.

\begin{thm}\label{thm:err_bound}
Let $1\leq p\leq \infty$ and $0\leq s< 1$. Assume $\mathcal{G}_j(X;\,\cdot)\in W^{n,p}(\Omega)\cap L^{\infty}(\Omega)$ for $j=1,\cdots, D$. For any $\epsilon\in(0,1/2)$, there is a standard NN, $\mathcal{G}^e$, with ReLU activation functions such that
\begin{equation}
    \Vert \Phi-\Phi^e\Vert_{s,p} \leq \epsilon .
\end{equation}
and the depth $K\leq c\log(\epsilon^{-n/(n-s)})$, the number of weights and units $N\leq c\epsilon^{-d/(n-s)}\log^2(\epsilon^{-n/(n-s)})$ with constant $c=c(d,n,p,s)>0$.
\end{thm}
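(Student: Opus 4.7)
The plan is to reduce the approximation of $\Phi$ in the Sobolev-Slobodeckij norm to the approximation of the $D$ scalar forward maps $\mathcal{G}_j(X;\,\cdot)$ and then invoke an existing ReLU-approximation theorem in Sobolev norms (of the type due to G\"uhring-Kutyniok-Petersen, 2020). First, I would algebraically expand the difference of the two quadratic potentials using the identity $a^\top M a - b^\top M b = (a-b)^\top M (a+b)$ to obtain
\begin{equation*}
\Phi(\theta)-\Phi^e(\theta)=\tfrac{1}{2}\bigl(\mathcal{G}^e(X;\theta)-\mathcal{G}(X;\theta)\bigr)^\top\Gamma^{-1}\bigl(2Y-\mathcal{G}(X;\theta)-\mathcal{G}^e(X;\theta)\bigr),
\end{equation*}
so that $\Phi-\Phi^e$ is a finite linear combination of pairwise products of the component errors $\delta_j:=\mathcal{G}^e_j-\mathcal{G}_j$ with $L^\infty$-bounded factors of the form $2Y_i-\mathcal{G}_i-\mathcal{G}^e_i$.

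Second, I would invoke the multiplier estimate in the Sobolev-Slobodeckij norm,
\begin{equation*}
\Vert fg\Vert_{s,p}\leq C_s\bigl(\Vert f\Vert_{s,p}\Vert g\Vert_\infty+\Vert f\Vert_\infty\Vert g\Vert_{s,p}\bigr),\qquad 0\leq s<1,
\end{equation*}
together with the triangle inequality, to reduce the problem to bounding $\Vert \delta_j\Vert_{s,p}$ for each $j=1,\dots,D$. The required $L^\infty$ bounds on $Y$, $\mathcal{G}_j$, and $\mathcal{G}^e_j$ follow from the hypothesis $\mathcal{G}_j\in L^\infty$ together with a standard clipping trick for the emulator (post-composing with a constant-size ReLU block that implements $x\mapsto \min\{\max\{x,-M\},M\}$), which leaves the asymptotic depth and size bounds unaffected.

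Third, I would apply the G\"uhring-Kutyniok-Petersen Sobolev-norm approximation theorem componentwise: for each $\mathcal{G}_j(X;\,\cdot)\in W^{n,p}(\Omega)\cap L^\infty(\Omega)$ and tolerance $\eta\in(0,1/2)$, that theorem produces a ReLU network $\mathcal{G}^e_j$ with depth $\leq c_1\log(\eta^{-n/(n-s)})$ and at most $c_1\eta^{-d/(n-s)}\log^2(\eta^{-n/(n-s)})$ weights and units such that $\Vert \mathcal{G}_j-\mathcal{G}^e_j\Vert_{s,p}\leq\eta$. Choosing $\eta$ proportional to $\epsilon/(DC_sM\Vert\Gamma^{-1}\Vert)$ and stacking the $D$ sub-networks in parallel (plus the clipping head) yields a single ReLU emulator $\mathcal{G}^e$ whose depth equals the common $K$ and whose total size is at most a factor $D$ larger than the individual size. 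Since $D$, $M$, $C_s$, and $\Vert\Gamma^{-1}\Vert$ depend only on the problem and not on $\epsilon$, these factors can be absorbed into the constant $c=c(d,n,p,s)$ in the statement, giving the claimed bounds on $K$ and $N$.

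The main obstacle is a rigorous proof of the product/multiplier estimate for the Gagliardo-Slobodeckij seminorm when $0<s<1$ and $p<\infty$: the double integral defining the seminorm does not factor, so one must split $|f(x)g(x)-f(y)g(y)|\leq|f(x)|\,|g(x)-g(y)|+|g(y)|\,|f(x)-f(y)|$ and bound each piece separately using the appropriate $L^\infty$ factor. A secondary, but essentially routine, issue is the bookkeeping required to show that parallel composition of the $D$ sub-networks together with a constant-depth clipping block preserves the exact depth/size asymptotics claimed; this must be carried out carefully enough that the constant $c$ remains independent of $\epsilon$.
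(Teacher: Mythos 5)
Your proposal follows essentially the same route as the paper's proof: the same algebraic expansion of $\Phi-\Phi^e$ into inner products of the componentwise errors $\mathcal{G}_j-\mathcal{G}_j^e$ against $L^\infty$-bounded factors, followed by a componentwise application of Theorem 4.1 of G\"uhring et al.\ (2020) at tolerance $\epsilon$ divided by a problem-dependent constant, with the constant absorbed into $c(d,n,p,s)$. If anything you are more careful than the paper, which silently uses the product estimate $\Vert fg\Vert_{s,p}\lesssim \Vert f\Vert_{s,p}\Vert g\Vert_\infty+\Vert f\Vert_\infty\Vert g\Vert_{s,p}$ in its final display and never addresses the $L^\infty$ bound on the emulator $\mathcal{G}^e_j$ itself (needed for the factor $y-\mathcal{G}^e$), both of which you flag explicitly and handle via the multiplier lemma and the clipping block.
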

\begin{proof}
Note that we have
\begin{equation*}
\begin{aligned}
    \Phi(\boldsymbol{\theta})-\Phi^e(\boldsymbol{\theta}) = \frac{1}{2} [\langle\mathcal{G}(X;\boldsymbol{\theta})-\mathcal{G}^e(X;\boldsymbol{\theta}), y-\mathcal{G}(X;\boldsymbol{\theta})\rangle_{\Gamma} +  \langle y-\mathcal{G}^e(X;\boldsymbol{\theta}),\mathcal{G}(X;\boldsymbol{\theta})-\mathcal{G}^e(X;\boldsymbol{\theta})\rangle_{\Gamma}]
\end{aligned}
\end{equation*}

Because $\mathcal{G}_j(X;\,\cdot)\in L^{\infty}((0,1)^d)$, there exists a constant $M>0$ such that $\max_{1\leq j\leq D}\Vert \mathcal{G}_j(X;\,\cdot)\Vert_\infty, \Vert y\Vert \leq M$.
For $\epsilon/(MD)>0$, by Theorem 4.1 of \citet{guhring2020error}, there exists a standard NN with ReLU activation functions and the depth $K$ and the number of weights and units as in the condition such that
\begin{equation*}
    \Vert \mathcal{G}_j(X;\,\cdot)-\mathcal{G}_j^e(X;\,\cdot)\Vert_{s,p} \leq \epsilon/(MD), \quad j=1,\cdots, D.
\end{equation*}
Therefore we have
\begin{equation*}
\begin{aligned}
    \Vert \Phi-\Phi^e\Vert_{s,p} &\leq M \sum_{j=1}^D \Vert \mathcal{G}_j(X;\,\cdot)-\mathcal{G}_j^e(X;\,\cdot)\Vert_{s,p} \leq \epsilon.
\end{aligned}
\end{equation*}
\end{proof}

Denote the Hellinger distance between densities as $d_H(\pi, \pi^e) = \int (\sqrt{\pi}-\sqrt{\pi^e})^e d\mu$. Then we describe how the emulation error propogates into the Hellinger error in the likelihood.
\begin{thm}
Let $\pi(\cdot\,;\boldsymbol{\theta})\propto \exp(-\Phi(\cdot\,;\boldsymbol{\theta}))$ and $\pi^e(\cdot;\boldsymbol{\theta})\propto \exp(-\Phi^e(\cdot\,;\boldsymbol{\theta}))$ denote the likelihood and its emulation, respectively. Suppose the conditions of Theorem \ref{thm:err_bound} holds for $p=\infty$. Then we have
\begin{equation}
    d_H(\pi, \pi^e) \lesssim \epsilon .
\end{equation}
where $\epsilon$ satisfies the constraints in Theorem \ref{thm:err_bound}.
\end{thm}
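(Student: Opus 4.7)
The plan is to reduce everything to a uniform bound on the potentials and then apply a standard perturbation argument for Gibbs densities. First I would specialize Theorem \ref{thm:err_bound} to $s=0$ and $p=\infty$ (which is within the stated admissible range), yielding a standard NN $\mathcal{G}^e$ such that $\Vert \Phi-\Phi^e\Vert_\infty \le \epsilon$. Since $\Vert y\Vert, \Vert \mathcal{G}_j(X;\cdot)\Vert_\infty \le M$, the same bound that was used in the previous theorem also shows $\Vert \Phi^e\Vert_\infty$ is controlled by a constant depending on $M$ and $D$ (for $\epsilon < 1/2$), and $\Phi, \Phi^e \ge 0$. This uniform two-sided control is what enables the Hellinger estimate.

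Next I would write $\pi = Z^{-1} e^{-\Phi}$ and $\pi^e = (Z^e)^{-1} e^{-\Phi^e}$ with $Z = \int e^{-\Phi}\, d\mu$ and $Z^e = \int e^{-\Phi^e}\, d\mu$, and split
\begin{equation*}
\sqrt{\pi} - \sqrt{\pi^e} = Z^{-1/2}\left(e^{-\Phi/2} - e^{-\Phi^e/2}\right) + \left(Z^{-1/2} - (Z^e)^{-1/2}\right) e^{-\Phi^e/2}.
\end{equation*}
For the first piece, the mean value theorem combined with $\Phi,\Phi^e \ge 0$ gives $|e^{-\Phi/2}-e^{-\Phi^e/2}| \le \tfrac{1}{2}|\Phi-\Phi^e| \le \tfrac{\epsilon}{2}$, so after squaring and integrating, the contribution to $d_H^2$ is $\lesssim \epsilon^2 / Z$. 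For the second piece, the same Lipschitz estimate yields $|Z - Z^e| \le \tfrac{\epsilon}{2}\int (\text{bounded})\,d\mu \lesssim \epsilon$, and then $|Z^{-1/2} - (Z^e)^{-1/2}| \lesssim |Z-Z^e|/\min(Z, Z^e)^{3/2}$, so this term also contributes $\lesssim \epsilon^2$. Combining the two and taking a square root gives $d_H(\pi,\pi^e) \lesssim \epsilon$ as claimed.

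The main obstacle, and the step that requires genuine care, is showing that the normalizing constants $Z$ and $Z^e$ are bounded both above and away from zero uniformly in the approximation parameter $\epsilon$. The upper bound is trivial from $\Phi,\Phi^e \ge 0$, so $Z, Z^e \le \mu(\Omega)$. The lower bound follows from $\Phi,\Phi^e \le C(M,D)$, giving $Z, Z^e \ge e^{-C}\mu(\Omega) > 0$. Once these bounds are established, the factors of $Z^{-1/2}$ and $\min(Z,Z^e)^{-3/2}$ that appear in the estimates above are absorbed into the implicit constant hidden in $\lesssim$, yielding the desired Hellinger bound. This argument is essentially the standard posterior well-posedness estimate from Bayesian inverse problem theory, adapted to emulator error rather than data perturbation.
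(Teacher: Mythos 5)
Your proposal is correct, and its core mechanism --- the Lipschitz bound $|e^{-a/2}-e^{-b/2}|\leq \tfrac{1}{2}|a-b|$ for $a,b\geq 0$ combined with the sup-norm estimate $\Vert\Phi-\Phi^e\Vert_\infty\leq\epsilon$ from Theorem \ref{thm:err_bound} --- is exactly the engine of the paper's argument. Where you differ is in the treatment of the normalizing constants. The paper's proof of this theorem writes
\begin{equation*}
2 d_H^2(\pi,\pi^e)=\int\Bigl[1-\exp\Bigl(\tfrac{1}{2}\Phi-\tfrac{1}{2}\Phi^e\Bigr)\Bigr]^2\pi\,dy,
\end{equation*}
which silently uses $\sqrt{\pi^e/\pi}=\exp\bigl(\tfrac{1}{2}(\Phi-\Phi^e)\bigr)$ and thereby drops the factor $\sqrt{Z/Z^e}$; the constants are never bounded there. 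Your two-term decomposition $\sqrt{\pi}-\sqrt{\pi^e}=Z^{-1/2}(e^{-\Phi/2}-e^{-\Phi^e/2})+(Z^{-1/2}-(Z^e)^{-1/2})e^{-\Phi^e/2}$, together with the explicit upper and lower bounds $e^{-C}\mu(\Omega)\leq Z,Z^e\leq\mu(\Omega)$ obtained from $0\leq\Phi,\Phi^e\leq C(M,D)$, closes precisely this gap; it is in fact the same decomposition the paper reserves for the subsequent posterior theorem, transplanted back to the likelihood. So your route is slightly longer but self-contained and strictly more careful: it buys a justified control of $|Z-Z^e|$ and of the implicit constant in $\lesssim$, at the cost of needing the two-sided boundedness of the potentials (which you correctly extract from $\Vert y\Vert,\Vert\mathcal{G}_j(X;\cdot)\Vert_\infty\leq M$). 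One cosmetic caveat: the integration in $d_H$ runs over the data variable $y$, not over $\Omega=(0,1)^d$, so the finite measure you invoke for the lower bound on $Z$ should be that of the (bounded) data domain rather than $\mu(\Omega)$; this is harmless under the same implicit boundedness assumption the paper itself uses.
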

\begin{proof}
Compute the Hellinger distance
\begin{equation*}
\begin{aligned}
2 d_H^2(\pi, \pi^e) &= \int (\sqrt \pi-\sqrt{\pi^e})^2 d\mu = \int \left[1-\exp\left(\frac{1}{2}\Phi(y;\boldsymbol{\theta})-\frac{1}{2}\Phi^e(y;\boldsymbol{\theta})\right)\right]^2 \pi(y;\boldsymbol{\theta}) dy \\
&\leq \int \frac{C}{4} |\Phi(y;\boldsymbol{\theta})-\Phi^e(y;\boldsymbol{\theta})|^2 \pi(y;\boldsymbol{\theta}) dy \leq C' \int \Vert \Phi-\Phi^e\Vert_{\infty}^2 \pi(y;\boldsymbol{\theta}) dy \lesssim \epsilon^2 .
\end{aligned}
\end{equation*}
where the last inequality is by Theorem \ref{thm:err_bound}. Taking square-root on both sides yields the conclusion.
\end{proof}

Furthermore, given Gaussian prior for $\boldsymbol{\theta}$, $\pi_0$, we can characterize the discrepancy of posteriors with the original and emulated likelihoods in the following theorem.
\begin{thm}
Let $\hat\pi(\boldsymbol{\theta})\propto \exp(-\Phi(y;\boldsymbol{\theta}))\pi_0(\boldsymbol{\theta})$ and $\hat\pi^e(\boldsymbol{\theta})\propto \exp(-\Phi^e(y;\boldsymbol{\theta}))\pi_0(\boldsymbol{\theta})$ denote the posterior and the one with emulated likelihood, respectively. Suppose the conditions of Theorem \ref{thm:err_bound} holds for $p=\infty$. Then we have
\begin{equation}
    d_H(\hat\pi, \hat\pi^e) \lesssim \epsilon .
\end{equation}
where $\epsilon$ satisfies the constraints in Theorem \ref{thm:err_bound}.
\end{thm}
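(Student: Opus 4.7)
The plan is to imitate the structure of the preceding Hellinger bound on the likelihoods, but now accounting for the two prior-weighted normalizing constants $Z := \int \exp(-\Phi(y;\boldsymbol{\theta}))\pi_0(\boldsymbol{\theta})\,d\boldsymbol{\theta}$ and $Z^e := \int \exp(-\Phi^e(y;\boldsymbol{\theta}))\pi_0(\boldsymbol{\theta})\,d\boldsymbol{\theta}$. I would begin by writing out
\begin{equation*}
2 d_H^2(\hat\pi,\hat\pi^e) = \int \left(\frac{e^{-\Phi/2}}{\sqrt{Z}} - \frac{e^{-\Phi^e/2}}{\sqrt{Z^e}}\right)^{\!2}\!\pi_0(\boldsymbol{\theta})\,d\boldsymbol{\theta},
\end{equation*}
and then use the standard add-and-subtract trick with $e^{-\Phi^e/2}/\sqrt{Z}$ inside the square, followed by the elementary inequality $(a+b)^2\leq 2a^2+2b^2$. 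This decomposes the bound into two pieces: (i) a pointwise likelihood-discrepancy term $\tfrac{1}{Z}\int (e^{-\Phi/2}-e^{-\Phi^e/2})^2\pi_0\,d\boldsymbol{\theta}$, and (ii) a normalizing-constant term $(Z^{-1/2}-(Z^e)^{-1/2})^2 \int e^{-\Phi^e}\pi_0\,d\boldsymbol{\theta}$.

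For piece (i), I would invoke the assumption that $\mathcal{G}_j(X;\,\cdot)$ (and hence its ReLU NN approximation, up to the $\epsilon$ error) is uniformly bounded, so that $\Phi$ and $\Phi^e$ are uniformly bounded on $\Omega$. On such a bounded range, $|e^{-a/2}-e^{-b/2}|\leq C|a-b|$, and the preceding theorem then yields an $\epsilon^2$ bound after integrating. For piece (ii), I would note that $|Z-Z^e|\leq \int |e^{-\Phi}-e^{-\Phi^e}|\pi_0\,d\boldsymbol{\theta}\lesssim \|\Phi-\Phi^e\|_\infty \lesssim \epsilon$ by the same elementary Lipschitz bound on the exponential, combined with Theorem \ref{thm:err_bound} at $p=\infty$. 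Converting $|Z-Z^e|\lesssim \epsilon$ into a bound on $|Z^{-1/2}-(Z^e)^{-1/2}|$ requires the elementary estimate $|Z^{-1/2}-(Z^e)^{-1/2}|\leq \tfrac{1}{2}\min(Z,Z^e)^{-3/2}|Z-Z^e|$, which reduces the problem to lower-bounding $Z$ and $Z^e$ uniformly away from zero.

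The main obstacle will be this last lower bound. Since $\Phi$ and $\Phi^e$ are controlled from above on $\Omega$ by a constant $M'$ depending on $M$ and $\|y\|$, we get $Z,Z^e\geq e^{-M'}\pi_0(\Omega)>0$, provided the Gaussian prior assigns positive mass to $\Omega=(0,1)^d$ (which it does in finite dimension). In the genuinely high- or infinite-dimensional regime relevant to BNNs, one would instead need to restrict the emulator's approximation domain or impose a truncation/cutoff on $\pi_0$; I would flag this as the place where the argument is most fragile and where an extra assumption may be needed. Assuming that lower bound, combining (i) and (ii) yields $2 d_H^2(\hat\pi,\hat\pi^e)\lesssim \epsilon^2$, and taking square roots gives $d_H(\hat\pi,\hat\pi^e)\lesssim \epsilon$, matching the claim.
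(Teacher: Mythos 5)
Your proof follows essentially the same route as the paper's: the same add-and-subtract decomposition with $(a+b)^2 \le 2a^2 + 2b^2$, the same local-Lipschitz bound on the exponential for the likelihood-discrepancy term, and the same reduction of the normalizing-constant term to $|Z - Z^e| \lesssim \Vert \Phi - \Phi^e \Vert_\infty$ via Theorem~\ref{thm:err_bound} at $p=\infty$. The one point where you go beyond the paper is in explicitly lower-bounding $Z$ and $Z^e$ away from zero (and flagging where that could fail); the paper absorbs this step into unexplained constants, so this is added rigor rather than a different approach.
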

\begin{proof}
By Bayes' theorem, we have
\begin{equation*}
\begin{aligned}
    \hat\pi(\boldsymbol{\theta}) &= \frac{1}{Z(y)}\exp(-\Phi(y;\boldsymbol{\theta}))\pi_0(\boldsymbol{\theta}), \quad 0< Z(y)= \int_{\Omega} \exp(-\Phi(y;\boldsymbol{\theta}))\pi_0(\boldsymbol{\theta}) d\boldsymbol{\theta}) <+\infty \\
    \hat\pi^e(\boldsymbol{\theta}) &= \frac{1}{Z(y)}\exp(-\Phi^e(y;\boldsymbol{\theta}))\pi_0(\boldsymbol{\theta}), \quad 0< Z^e(y)= \int_{\Omega} \exp(-\Phi^e(y;\boldsymbol{\theta}))\pi_0(\boldsymbol{\theta}) d\boldsymbol{\theta}) <+\infty .
\end{aligned}
\end{equation*}
Therefore we compute the Hellinger distance
\begin{equation*}
\begin{aligned}
d_H(\hat\pi, \hat\pi^e) &= \int_\Omega \left[Z(y)^{-\frac{1}{2}}\exp\left(-\frac{1}{2}\Phi(y;\boldsymbol{\theta})\right) - Z^e(y)^{-\frac{1}{2}}\exp\left(-\frac{1}{2}\Phi^e(y;\boldsymbol{\theta})\right)\right]^2 \pi_0(d\boldsymbol{\theta}) \\
&\leq \frac{2}{Z(y)} \int_\Omega \left[\exp\left(-\frac{1}{2}\Phi(y;\boldsymbol{\theta})\right) -\exp\left(-\frac{1}{2}\Phi^e(y;\boldsymbol{\theta})\right)\right]^2 \pi_0(d\boldsymbol{\theta}) + 2|Z(y)^{-\frac{1}{2}}-Z^e(y)^{-\frac{1}{2}}|^2 Z^e(y) \\
&\leq 2\int_\Omega \left[1-\exp\left(\frac{1}{2}\Phi(y;\boldsymbol{\theta})-\frac{1}{2}\Phi^e(y;\boldsymbol{\theta})\right)\right]^2 \hat\pi(d\boldsymbol{\theta}) + C |Z(y)-Z^e(y)|^2 \\
&\leq \int \frac{C'}{2} |\Phi(y;\boldsymbol{\theta})-\Phi^e(y;\boldsymbol{\theta})|^2 \hat\pi(d\boldsymbol{\theta}) + C \int_\Omega |\exp(-\Phi(y;\boldsymbol{\theta})) -\exp(-\Phi^e(y;\boldsymbol{\theta}))|^2 \pi_0(d\boldsymbol{\theta}) \\
&\leq \frac{C'}{2} \Vert \Phi-\Phi^e\Vert_\infty^2 + C'' \Vert \Phi-\Phi^e\Vert_\infty^2 \lesssim \epsilon^2
\end{aligned}
\end{equation*}
where the last inequality is by Theorem \ref{thm:err_bound}. Taking square-root on both sides yields the conclusion.
\end{proof}

\section{Numerical Experiments}

\subsection{Setup}
We demonstrate the effectiveness of our method on eleven synthetic and real-world datasets, comparing it against a comprehensive selection of baseline approaches.

\paragraph{Datasets.} We experiment on a series of regression and classification problems. Detailed information regarding these datasets, including the number of features and datapoints in each, and the number of parameters used in the main FBNN model for each dataset, is outlined in Table \ref{tab:datasets}. We have also included the details of the DNN Emulator architecture for each dataset in Table \ref{tab:dnn_emulator}.

\begin{table}[t!]

  \centering
  \caption{Description of various datasets used to evaluate the overall performance of our proposed approach against state-of-the-art baseline methods.}
  \label{tab:datasets}
\begin{tblr}{
    colspec={X[0.8,c] X[1.5,c] X[c] X[c] X[1.6,c]},  
    row{odd}={bg=lightgray},
    cell{odd}{1}={bg=white},
    row{1}={bg=RoyalBlue, fg=white},
    row{1}={font=\bfseries},
    hline{7,13} = {},
    vlines,
}
    Task & Dataset & {\# Datapoints} & {\# Features} & {\# FBNN parameters} \\
    Regression & Boston Housing & 506 & 13 & 3,009 \\
    & Wine Quality & 1,599 & 11 & 241 \\
    & Alzheimer & 185,831 & 56 & 33,345 \\
    & Year Prediction MDS & 515,345 & 90 & 81,901 \\
    & Simulation & 5,000,000 & 1,000 & 52,832 \\
    Classification & Adult & 40,434 & 14 & 2,761 \\
    & Mnist & 70,000 & 784 & 3,961 \\
    & Alzheimer & 185,831 & 56 & 33,345 \\
    & celebA & 202,599 & 39 & 1,521 \\
    & SVHN & 630,420 & 3072 & 171,313 \\
    & Simulation & 5,000,000 & 1,000 & 52,832 \\
  \end{tblr}

\vspace{12pt}

  \centering
  \caption{Description of various datasets and their corresponding DNN Emulator architectures (Droupout layers have been used on input layer and first hidden layer)}
  \label{tab:dnn_emulator}
\begin{tblr}{
    colspec={X[1.2,c] X[1.6,c] X[c] X[1.4,c] X[1.4,c] X[c] X[c] X[2,c]},
    row{odd}={bg=lightgray},
    cell{odd}{1}={bg=white},
    row{1}={bg=RoyalBlue, fg=white},
    row{1}={font=\bfseries},
    hline{7,13} = {},
    vlines
}
    Task & Dataset & {\#Hidden Layers} & {\# Neurons per Layer} & Activation Functions & \#Epochs & Dropout Rate\\
    Regression & Boston Housing & 2 & 3,32 & ReLU & 1000 & 0.7 \\
    & Wine Quality & 2 & 3,32 & ReLU & 1000 & 0.5 \\
    & Alzheimer & 2 & 4,64 & ReLU & 1000 & 0.5 \\
    & Year Prediction & 2 & 4,64 & ReLU & 1000 & 0.5 \\
    & Simulation & 3 & 8,64,32 & ReLU & 1000 & 0.5 \\
    Classification & Adult & 2 & 4,32 & ReLU & 1000 & 0.5 \\
    & Mnist & 2 & 3,64 & ReLU & 1000 & 0.5 \\
    & Alzheimer & 2 & 4,64 & ReLU & 1000 & 0.5 \\
    & celebA & 2 & 3,32 & ReLU & 1000 & 0.5 \\
    & SVHN & 3 & 4,64,32 & ReLU & 1000 & 0.7 \\
    & Simulation & 3 & 8,64,32 & ReLU & 1000 & 0.5 \\
  \end{tblr}
\end{table}

\paragraph{Baseline Methods.} We present empirical evidence comparing our CES method against a broad array of baseline approaches including two baseline BNN methods equipped with the SGHMC and pCN samplers (shown as BNN-SGHMC and BNN-pCN), and BNN architectures incorporating Variational Inference, Lasso Approximation, MC-Dropout, SWAG, and RNS-HMC. Detailed information about these methods was provided in Section \ref{sec:methods}. We also include the results from DNN, which does not provide uncertainty quantification, but serves as a reference point. Moreover, we provide the results of Deep Ensembles, which consist of multiple DNNs, each initialized with different random seeds. We refer to this method as Ensemble-DNN. Although the Ensemble-DNN approach allows for parallelization, it falls short in providing a probabilistic framework for analysis, a significant advantage offered by our CES method.

As discussed earlier, one of the distinctive features of our main FBNN model, more specifically shown as FBNN (SGHMC-pCN), lies in its strategic integration of the SGHMC sampler during the calibration step and the pCN algorithm during the sampling step. This combination is carefully chosen to harness the complementary strengths of these two sampling methods. Further expanding our exploration, we introduce three additional FBNN models: FBNN (pCN-SGHMC), where pCN is employed in the calibration step and SGHMC in the sampling step; FBNN (pCN-pCN), where pCN is used in both steps; and FBNN (SGHMC-SGHMC), where SGHMC is used in both calibration and sampling steps. 

Throughout these experiments, we collect 2000 posterior samples for the BNN-SGHMC and BNN-pCN, with samples being collected at each iteration. In contrast, for the FBNN methods, we use a small number (200) of samples from either BNN-SGHMC or BNN-pCN (depending on the specific FBNN model) along with the corresponding predicted outputs during the calibration step. These 200 samples serve as the ``training data'' for the emulator. 
Moreover, we evaluate the efficacy of utilizing only the initial 200 samples from the BNN-SGHMC model across all the datasets. This was done to demonstrate the necessity of collecting more samples, either using the original BNN or employing the FBNN method, rather than relying our inference on a limited number of initial samples.
 
It is also crucial to highlight that the BNN-SGHMC and BNN-pCN models are trained from a randomly chosen initial point for the MCMC sampling process. On the other hand, in the FBNN methods, we employ the set of posterior samples collected during the last iteration of the calibration step as the starting point for the subsequent MCMC sampling.

\paragraph{Metrics.} To thoroughly assess the performance and effectiveness of each method, we use a range of key metrics. These include Mean Squared Error (MSE) for regression tasks (Figure \ref{fig:mse}) and Accuracy for classification tasks (Figure \ref{fig:accuracy}). We also evaluate the models based on their computational cost, and various statistics related to the Effective Sample Size (ESS) of model parameters. These statistics include the minimum, maximum, and median ESS, as well as the minimum ESS per second. We also quantify the amount of speedup, denoted as ``spdup'', a metric that compares the minimum ESS per second of each model with that of BNN-SGHMC as the benchmark (Figure \ref{fig:spdup}). Analysing spdup is crucial as it provides a comparative measure of efficiency, highlighting the model's capability to achieve high-quality parameter sampling with lower computational resource utilization relative to the benchmark BNN-SGHMC.

The effective sample size takes the autocorrelation among the consecutive samples into account. While we can reduce autocorrelation using the thinning strategy, this leads to a higher computational time for the same number of samples. Our spdup metric allows for a fair comparison of sampling algorithms (regardless of what thinning strategy used) by taking both autocorrelation and computational cost into account.

For UQ in regression cases, we evaluate the Coverage Probability (CP) set at 95\%. In addition, we construct 95\% Credible Intervals (CI) by the prediction results of Bayesian models, along with the average true output, to illustrate UQ in regression problems. For classification problems, we use Expected Calibration Error (ECE) and Reliability Diagrams to evaluate UQ. ECE addresses model calibration, aiming for accurate uncertainty estimates, while reliability diagrams offer a visual summary of probabilistic forecasts.

Figures \ref{fig:mse}, \ref{fig:accuracy}, and \ref{fig:spdup} summarize our results. More detailed results are provided in Tables \ref{tab:performance-regression} and \ref{tab:performance-classification}.

\begin{figure}[t!]
    \centering
    \includegraphics[width=1\textwidth]{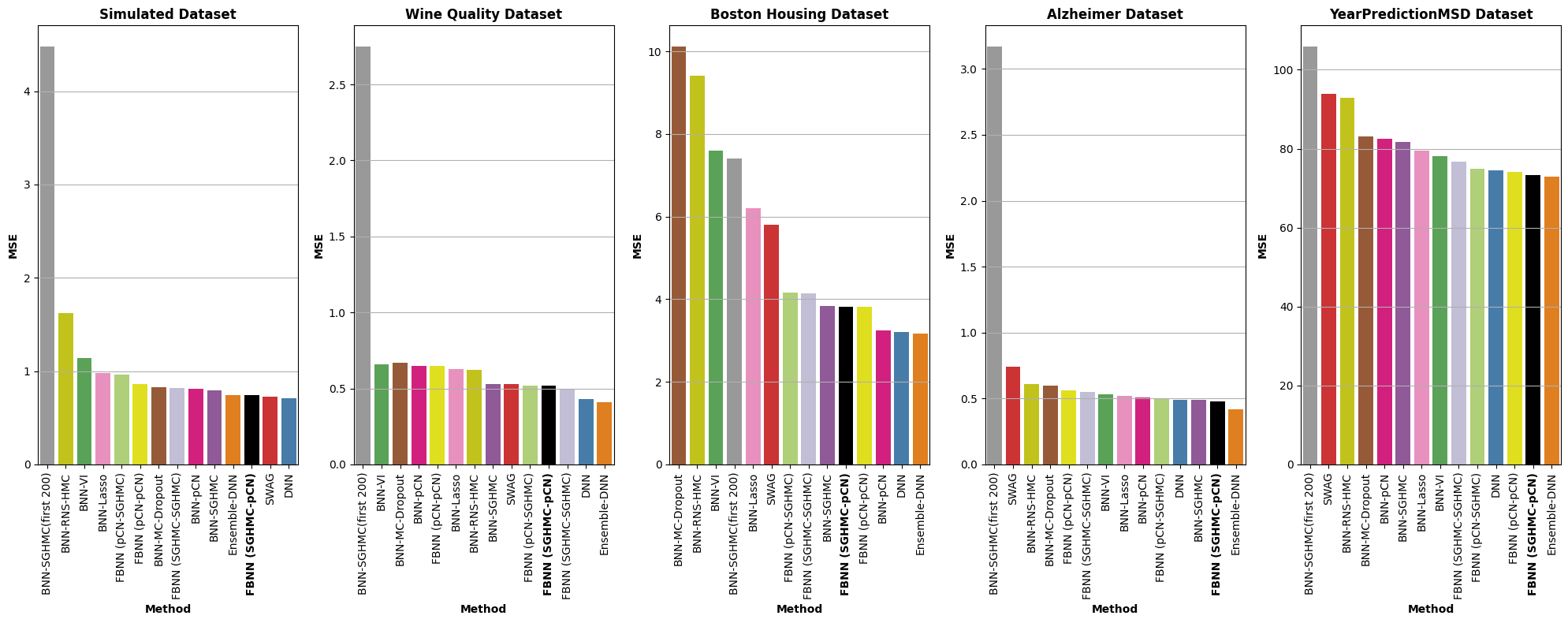}
    \caption{A comprehensive comparison of the MSE for various BNN methods across five regression datasets. Each subplot corresponds to a dataset and the color-coded bars represent the distinct methodologies evaluated. The main FBNN method, called FBNN (SGHMC-pCN), highlighted in bold, shows among the lowest values for MSE among all datasets.}
    \label{fig:mse}
    \vspace{6pt}

    \centering
    \includegraphics[width=1\textwidth]{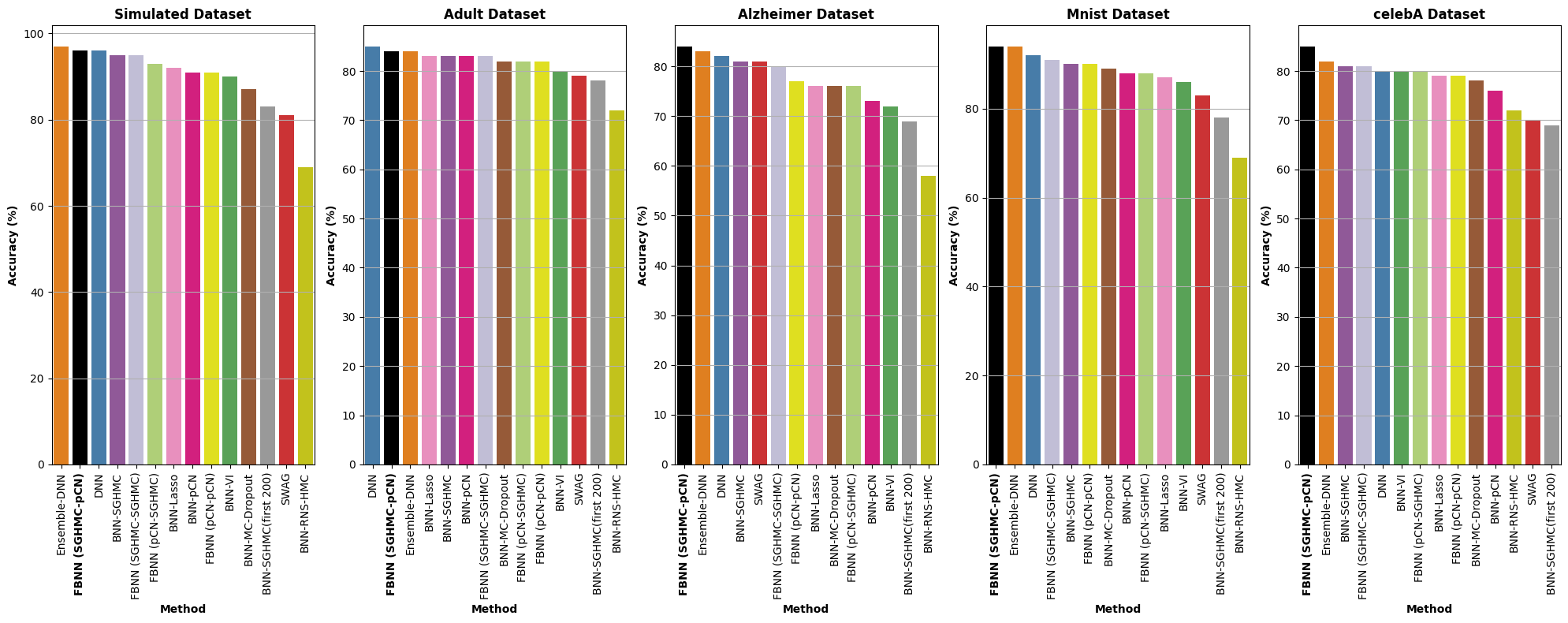}
    \caption{A comprehensive comparison of the Accuracy for various BNN methods across five classification datasets. Each subplot corresponds to a dataset and the color-coded bars represent the distinct methodologies evaluated. The main FBNN method, FBNN (SGHMC-pCN), highlighted in bold, demonstrates superior performance by achieving the highest accuracy in four out of the five datasets examined and the second highest in one dataset. }
    \label{fig:accuracy}
    \vspace{6pt}

\end{figure}

\subsection{Regression Tasks} 

We first evaluate our proposed method using a set of simulated and real regression problems. 

\begin{figure}[t!]
    \centering
    \includegraphics[width=1\textwidth]{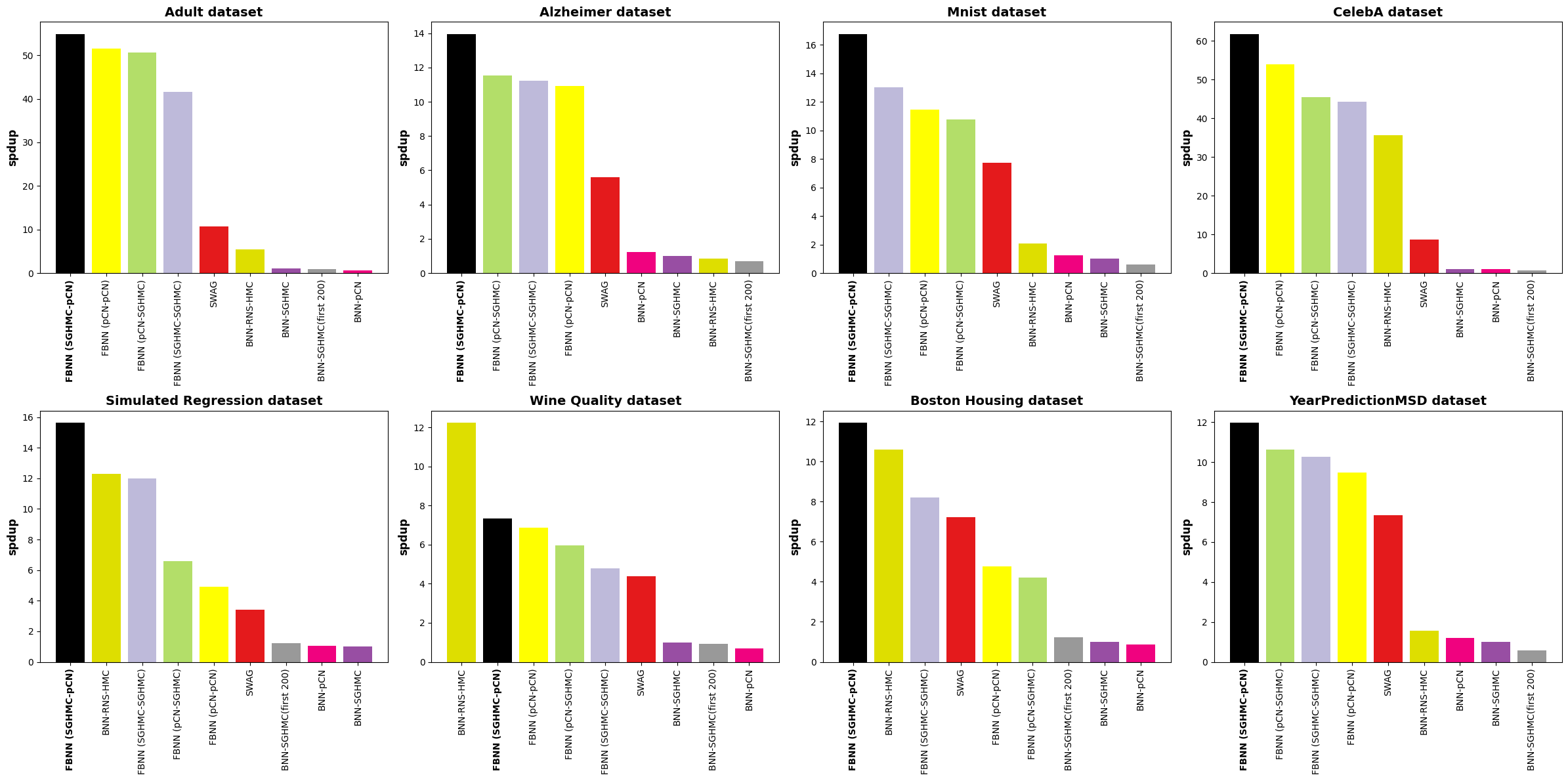}
    \caption{Comparative analysis of speedup (spdup) for various Bayesian Neural Network (BNN) methods across tested datasets. Methods are ordered by efficiency within each dataset, highlighting the impact of model characteristics on sampling performance. Notably, "FBNN (SGHMC-pCN)" achieves the highest speedup among all datasets except for the Wine Quality dataset, where it ranks as the second highest, underscoring its exceptional efficiency in diverse analytical contexts.}
    \label{fig:spdup}
    \vspace{-8pt}
\end{figure}

\paragraph{Simulated Data.}

We begin our empirical evaluation by using simulated data. 
To this end, we utilize the \verb|make_regression| function from the \verb|sklearn.datasets| package to generate a dataset consisting of 5,000,000 observations and 1,000 predictors.

Figure \ref{fig:loglikelihood} compares the true and emulated log likelihood functions associated with posterior samples collected using BNN-SGHMC. The emulated values are based on the FBNN (SGHMC-pCN) model. As we can see, the two functions are similar, indicating that the emulator provides a reasonable approximation of the true target distribution.

Figure \ref{fig:mse} compares the MSE among all models, showing that while the DNN method achieves the lowest MSE at 0.71, the FBNN (SGHMC-pCN) model provides a similar performance. Notably, among all the FBNN variants, FBNN (SGHMC-pCN) provides the highest CP at 92.2\%, demonstrating a level of calibration comparable to that of the BNN model. The Ensemble-DNN demonstrates comparable performance to FBNN (SGHMC-pCN) in terms of CP, yet it operates at a pace three times slower. 

Examining the efficiency of sample generation, all FBNN variants have relatively higher ESS per second compared to all the other BNN models, except for BNN-RNS-HMC. Among all the models, FBNN (SGHMC-pCN) has the highest min ESS per second at 0.043. Figure \ref{fig:spdup} indicates our model provides the highest speedup (16.33) compared to BNN-SGHMC as the baseline model, highlighting our method's computational efficiency.
Considering these results, FBNN (SGHMC-pCN) emerges as a strong approach with a good balance between predictive accuracy, computational efficiency, and uncertainty quantification, making it the overall best option for Bayesian deep learning.

Figure \ref{fig:ciwine_all}a shows the estimated mean and prediction uncertainty for both BNN-SGHMC and FBNN (SGHMC-pCN) models, alongside the smoothed average and 95\% interval for the true output. For clarity and conciseness within our figures, we have employed Principal
Component Analysis (PCA) and used the first principal component to transform the original data into a one-dimensional representative feature (x-axis in Figure \ref{fig:ciwine_all}). As we can see, BNN and FBNN have very similar credible intervals. This consistency in credible interval bounds is significant for UQ, indicating that both models effectively and almost equally quantify uncertainty in their predictions.

\begin{figure}[t!]
    \centering
    \includegraphics[width=0.95\textwidth]{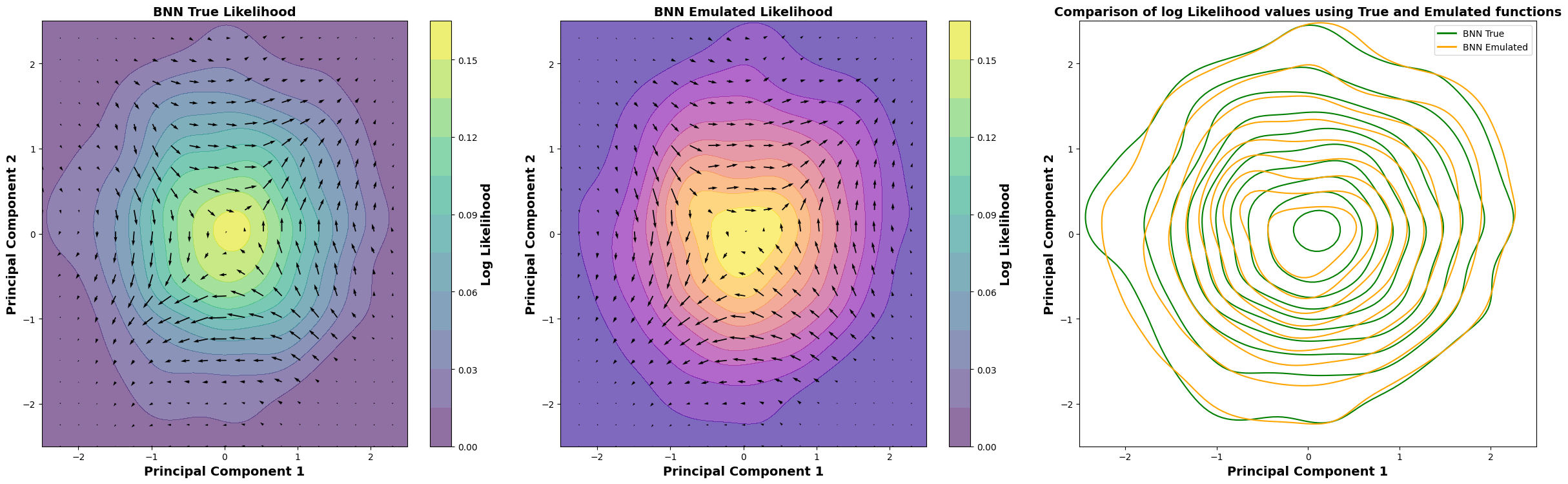}
    \caption{Evaluating the performance of the emulator. The plot contrasts log likelihood values obtained using the true likelihood function against those derived using the emulated likelihood function. The x-axis and y-axis represent the first and second principal components of the model parameters based on the MCMC samples obtained in our simulation study.}
    \label{fig:loglikelihood}
    \vspace{12pt}

    \centering
    \begin{subfigure}{.32\textwidth}
        \centering
        \includegraphics[width=\linewidth]{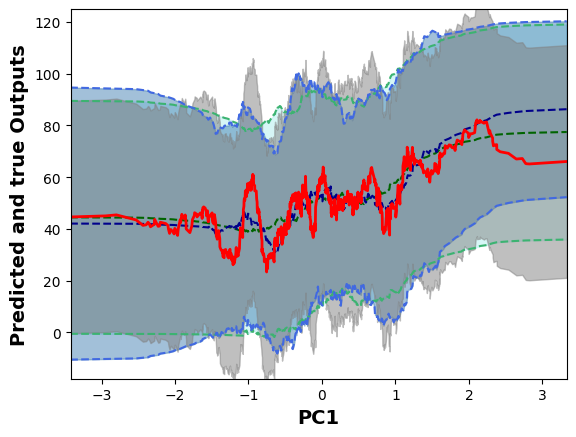}
        \caption{Simulated data}
        \label{fig:cisimulate}        
    \end{subfigure}%
    \hspace{3pt}
    \begin{subfigure}{.32\textwidth}
        \centering
        \includegraphics[width=\linewidth]{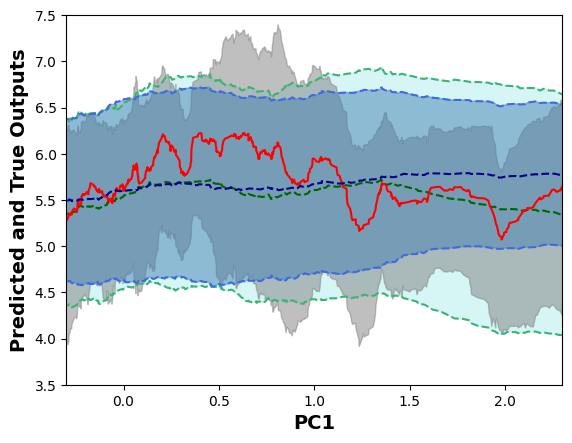}
        \caption{Wine Quality data}
        \label{fig:ciwine}
    \end{subfigure}%
    \hspace{3pt}
    \begin{subfigure}{.32\textwidth}
        \centering
        \includegraphics[width=\linewidth]{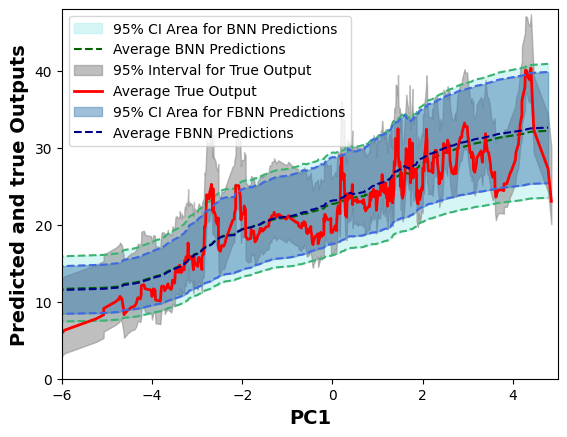}
        \caption{Boston Housing data}
        \label{fig:ciboston}
    \end{subfigure}
    \caption{Comparative analysis of predictive credible intervals and mean predictions for regression tasks.  For each dataset, the 95\% CI for BNN predictions and FBNN predictions are shown as shaded areas. The average predictions from BNN and FBNN are represented with dashed lines. Additionally, the 95\% CI for the true output as ground truth and the smoothed average true output are plotted as solid lines. The  x-axis shows the first principal component of the predictors. }
    \label{fig:ciwine_all}
\end{figure}

\paragraph{Wine Quality Data.}
As the first real dataset for the regression task, we use the Wine Quality data \citep{cortez2009}. This dataset contains various physicochemical properties of different wines, while the target variable is the quality rating. The performance of FBNN (SGHMC-pCN) indicates a well-balanced approach, making it superior to the other models for several reasons. Firstly, it achieves a competitively low MSE of 0.52, comparable to other high-performing models like BNN-SGHMC and SWAG, but it surpasses them in terms of speedup. Moreover, FBNN (SGHMC-pCN) exhibits a robust predictive performance, with a competitively high CP.
Figure \ref{fig:ciwine_all}b shows the prediction mean and 95\% CI BNN-SGHMC and FBNN (SGHMC-pCN), as well as the smoothed average and 95\% interval for the true output.

\paragraph{Boston Housing Data.}

The Boston housing dataset was collected in 1978 \citep{harrison1978}. Each of the entries present aggregated data for homes from various suburbs in Boston. For this dataset, FBNN (SGHMC-pCN) stands out with a notable balance between MSE (3.82), CP (81.1\%), and computational efficiency, completing the task in just 91 seconds. This model significantly outperforms all the other models in terms of speedup (11.94), showcasing its effectiveness in sampling.
Figure \ref{fig:ciwine_all}c shows the 95\% CIs and mean predictions of both BNN-SGHMC and FBNN (SGHMC-pCN). The FBNN (SGHMC-pCN) model, in particular, displays well-calibrated uncertainty quantification, mirroring the performance of the BNN models, implying that its probabilistic predictions capture the model uncertainty.


\paragraph{Alzheimer Data.}

Next, we analyze the data from the National Alzheimer’s Coordinating Center (NACC), which is responsible for developing and maintaining a database of patient information collected from the Alzheimer disease centers (ADCs) funded by the National Institute on Aging (NIA) \citep{beekly2004national}. The NIA appointed the ADC Clinical Task Force to determine and define an expanded, standardized clinical data set, called the Uniform Data Set (UDS). The goal of the UDS is to provide ADC researchers a standard set of assessment procedures to identify Alzheimer's disease \citep{beekly2007national}. We have used 56 key features for our analysis. These features were carefully selected to represent a wide spectrum of variables relevant to Alzheimer's disease diagnosis, including functional abilities, brain morphometrics, living situations, and demographic information \citep{ren2022hierarchical}. For the regression case, the goal is to predict Left Hippocampus Volume, a critical marker in the progression of the disease \citep{vanderFlier2009}, as a function of other variables.
For this dataset, Figure \ref{fig:mse} shows that the FBNN (SGHMC-pCN) model stands out for its balanced performance, recording the second lowest MSE at 0.48 and a relatively high CP at 91.6\%. It shows a considerable improvement in computational efficiency, evidenced by a speedup factor of 22 times compared to BNN-SGHMC as the baseline BNN model.


\paragraph{Year Prediction MSD Data.}
For this data, the goal is to predict the release year of a song from audio features. Songs are mostly western, commercial tracks ranging from 1922 to 2011, with a peak in the year 2000s \citep{year_prediction_msd_203}.
In the context of the YearPredictionMSD dataset, FBNN (SGHMC-pCN) showcases its superiority over other models by achieving a good balance between accuracy, computational efficiency, and effective uncertainty quantification. With an MSE of 73.41, close to that of Ensemble-DNN, and CP of 92.23\% it outperforms most other models. 
Moreover, the computational efficiency of FBNN (SGHMC-pCN) is highlighted by its speedup factor of 11.98 (Figure \ref{fig:spdup}) over the baseline model BNN-SGHMC.

\subsection{Classification Tasks} 

Next, we evaluate our method based on a set of simulated and real classification problems. The results are summarized in Figures \ref{fig:accuracy} and \ref{fig:spdup}. More details results are provided in Table \ref{tab:performance-classification}.

\begin{figure}[t!]
\centering
\includegraphics[width=1\textwidth]{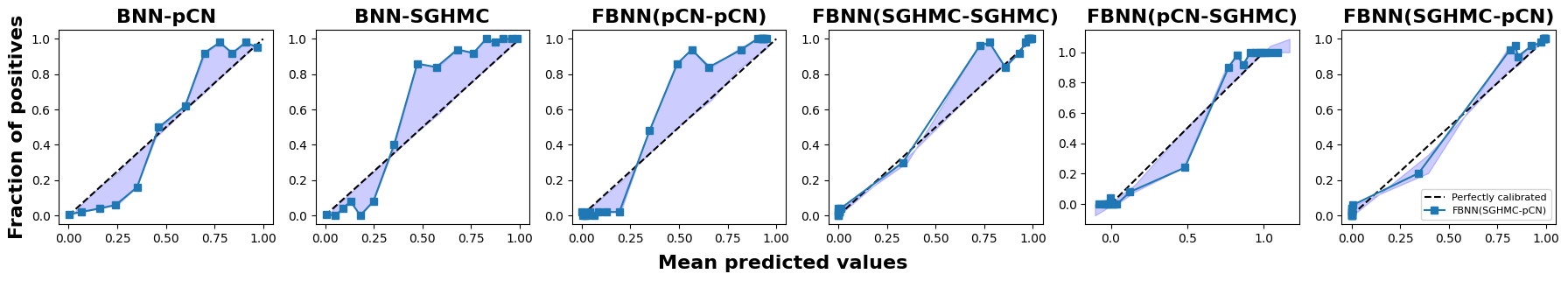}
\caption{Reliability Diagrams for Simulated dataset in classification task. These diagrams incorporate equal frequency binning.}
\label{fig:reliability}
\end{figure}

\paragraph{Simulated Data.}
As before, we start with a simulated dataset with a binary outcome. For this, we use the \verb|make_classification| function from the \verb|sklearn.datasets| package to generate a dataset consisting of 5,000,000 observations and 1000 predictors. 
Accuracy comparison in Figure \ref{fig:accuracy} shows FBNN (SGHMC-pCN), DNN, and Ensemble-DNN exhibit comparable performance and outperform other models. While BNN-RNS-HMC achieves the highest speedup, it significantly underperforms in terms of accuracy. In contrast, FBNN (SGHMC-pCN) provides the second-highest speedup at 32.00, showcasing its computational efficiency relative to BNN-SGHMC. Furthermore, it maintains the second highest accuracy rate of 96\%, indicating an optimal balance between computational efficiency and accuracy.

For this example, BNN-MC-Dropout has the lowest ECE value compared to other methods (see Appendix), but it also have a lower accuracy rate. Among the FBNN variants, FBNN (SGHMC-pCN) presents a low ECE, closely aligning with the ECE value of BNN-MC-Dropout, while providing an accuracy rate similar to DNN.
Moreover, as illustrated in Figure \ref{fig:reliability}, the FBNN variants, particularly FBNN (SGHMC-pCN), appear to be better calibrated across most probability ranges except at the highest probabilities, suggesting a more reliable UQ. Note that in these figures, a model with a reliability curve that closely follows the diagonal line is considered better calibrated, meaning its predicted probabilities are more aligned with actual outcomes.

\paragraph{Adult Data.}

Next, we use the Adult dataset \citep{becker1996}, where the classification task involves predicting whether an individual will earn more or less than \$50,000 per year. 
Figure \ref{fig:accuracy} demonstrates that all the methods achieve comparable accuracy rates, although DNN, Ensemble-DNN, and FBNN(SGHMC-pCN) have the best performance. FBNN (SGHMC-pCN) also stands out as the most computationally efficient method for uncertainty quantification, with a speedup value of 54.91 relative to the baseline BNN approach. A low ECE value for the FBNN (SGHMC-pCN) model signifies its superior performance in terms of uncertainty quantification.

\paragraph{Alzheimer Data.}
We have used the same NACC dataset we previously discussed, but this time as a classification problem. Here, our objective is to predict cognitive status, classifying individuals as either cognitively unimpaired (healthy controls, HC), labeled as class 0, or as having mild cognitive impairment (MCI) due to Alzheimer's disease (AD) or dementia due to AD, labeled as class 1. 

Achieving the highest accuracy of 84\% at a relatively low computational cost, FBNN (SGHMC-pCN) surpasses all other models in correctly identifying Alzheimer's disease, making it the most reliable model among those tested. This model not only surpasses the accuracy of the standard DNN and Ensemble-DNN models but also offers a balance between computational efficiency and high accuracy. FBNN (SGHMC-pCN) demonstrates the highest sampling efficiency (speedup of 11), indicating it can achieve high accuracy with a lower computational cost compared to other Bayesian models. Moreover, for the FBNN model implemented on the Alzheimer dataset, the ECE value is low, and the reliability curve closely tracks the diagonal line.

\paragraph{MNIST Dataset.}

The MNIST dataset is commonly used as a benchmark dataset for the hand-written digit classification task \citep{deng2012mnist}. Among the various models evaluated on the MNIST dataset, FBNN (SGHMC-pCN) stands out as the optimal choice, demonstrating exceptional performance across multiple metrics. It achieves the highest accuracy of 94\%, matching the top performance of Ensemble-DNN but with significantly improved efficiency and effectiveness in uncertainty quantification. It exhibits a substantial speedup of 16.77, and the lowest ECE at 0.241, suggesting that it not only provides highly accurate predictions but also reliably estimates the uncertainty associated with these predictions.

\paragraph{CelebA Dataset.}
CelebA \citep{liu2015deep} is an image dataset of celebrity faces annotated with 40 attributes including gender, hair color, age, smiling, etc. The task is to predict hair color, which is either blond $Y = 1$ or non-blond $Y = 0$. 
The FBNN (SGHMC-pCN) model stands out among the alternative methods, showcasing its superiority through several key performance metrics. It achieves the highest accuracy of 85\%, and the highest speedup factor of 61.84, indicating an exceptional balance between computational efficiency and performance. The model achieves the lowest ECE of 0.493, indicating reliability in its predictive uncertainty.

\paragraph{SVHN dataset.}

The Street View House Numbers (SVHN) dataset \citep{netzer2011} includes labelled real-world images of house numbers taken from Google Street View. The images are 32x32 pixels in size and have three color channels (RGB). The goal is to classify digit images into 10 classes. The results demonstrate the superiority of FBNN (SGHMC-pCN) in terms of accuracy and computational efficiency. FBNN (SGHMC-pCN) achieved an accuracy of 96\%, the second highest among all models. The speedup compared to the baseline BNN-SGHMC method was 14.99 times, the highest speedup value recorded. Additionally, the low ECE of 0.203 demonstrates better uncertainty quantification than most other methods, including BNN-VI, BNN-MC-Dropout, and SWAG. 


\section{Conclusion}
In this paper, we have proposed an innovative CES framework called FBNN, specifically designed to enhance the computational efficiency and scalability of BNN for high-dimensional data. Our primary goal is to provide a robust solution for uncertainty quantification in high-dimensional spaces, leveraging the power of Bayesian inference while mitigating the computational bottlenecks traditionally associated with BNN.

In our numerical experiments, we have successfully applied several variants of FBNN, including different configurations with BNN, to regression and classification tasks on both synthetic and real datasets. Remarkably, the FBNN variant incorporating SGHMC for calibration and pCN for sampling, denoted as FBNN (SGHMC-pCN), not only matches the predictive accuracy of traditional BNN but also offers substantial computational advantages. 
More specifically, our numerical experiments across various regression and classification tasks consistently demonstrate the superiority of the FBNN (SGHMC-pCN) method over traditional BNNs and DNNs. In regression tasks, FBNN (SGHMC-pCN) demonstrates a competitive MSE while significantly enhancing computational efficiency, achieving notable speedups compared to baseline models. This efficiency does not come at the expense of accuracy, as evidenced by the competitive MSE values and robust uncertainty quantification metrics. In classification tasks, FBNN (SGHMC-pCN) stands out by achieving high accuracy rates and low ECE values, which indicate reliable uncertainty quantification.

The superior performance of FBNN (SGHMC-pCN) can be attributed to the complementary strengths of SGHMC and pCN. SGHMC excels at broad exploration of the parameter space, providing an effective means for understanding the global structure during the calibration step. On the other hand, pCN is adept at efficient sampling around modes, offering a valuable tool for capturing local intricacies in the distribution during the final sampling step. By combining these samplers within the FBNN framework, we achieve a balanced approach between exploration (calibration with SGHMC) and exploitation (final sampling with pCN). 

Future work could involve extending our method to more complex problems (e.g., spatiotemporal data) and complex network structures (e.g., graph neural networks). Additionally, future research could focus on improving the emulation step by optimizing the DNN architecture. Finally, our method could be further improved by embedding the sampling algorithm in an adaptive framework similar to the method of \citet{Zhang2018}.

\section*{Acknowledgements}
The authors thank the Editor, Action Editor, and anonymous reviewers for their insightful suggestions and constructive feedback, which significantly improved the article. This work was supported by NSF grants NCS-FR-2319618 and DMS-2134256.


\bibliography{arxiv}
\bibliographystyle{tmlr}

\newpage

\appendix

\setcounter{table}{0}
\renewcommand{\thetable}{A\arabic{table}}

\section*{Appendix A: Comparing Various Deep Learning Techniques for Regression and Classification Problems}

\begin{table*}[h!]
\centering
\caption{Performance of various deep learning methods based on regression problems. For ESS, minimum, median, and maximum values are provided in parenthesis.}
\fontsize{7pt}{8pt}\selectfont
  \begin{tabular}{lcccccccc}
    \textbf{Dataset} & \textbf{Method} & \textbf{MSE} & \textbf{CP} & \textbf{Time (s)} & \textbf{ESS} & \textbf{minESS/s} & \textbf{spdup} \\
    \hline
Simulated & DNN & 0.71 & - & 3531 & - & - & - \\
    Dataset    & Ensemble-DNN & 0.74 & 91.3\% & 14633 & - & - & - \\
    & BNN-VI & 1.14 & 88.8\% & 8408 & - & - & -\\
    & BNN-Lasso & 0.98 & 87.4\% & 6941 & - & - & -\\
    & BNN-MC-Dropout & 0.83 & 93.4\% & 2861 & - & - & -\\
    & BNN-SGHMC & 0.79 & 91.6\% & 41281 & (109, 829, 1642) & 0.002 & 1.00 \\
    & BNN-SGHMC(first 200) & 4.48 & 92.8\% & 3970 & (21, 66, 162)  & 0.005 &  2.01\\    
    & SWAG & 0.73 & 90.0\% & 13488 & (93, 1163, 1542)  & 0.006 & 2.61\\
    & BNN-RNS-HMC & 1.62 & 87.5\% & 6046 & (126, 948, 1510) & 0.021 & 7.89\\
    & BNN-pCN & 0.81 & 89.3\% & 42523 & (107, 844, 1533) & 0.002 &  0.95\\        
    & FBNN (pCN-SGHMC) & 0.96 & 79.6\% & 4512 & (132, 1241, 1615)  & 0.029 & 11.08 \\      
    & FBNN (pCN-pCN) & 0.86 & 85.2\% & 4631 & (94, 975, 1621)  & 0.020 & 7.69 \\
    & FBNN (SGHMC-SGHMC) & 0.82 & 90.1\% & 4497 & (137, 1236, 1638)  & 0.030 & 11.53\\
    & FBNN (SGHMC-pCN) & 0.74 & 92.2\% & 4312 &  (186, 912, 1606)  & 0.043 & 16.33 \\    
    \hline
    Wine & DNN & 0.43 & -  & 26 & - & - & - \\
    Quality & Ensemble-DNN & 0.41 & 47.4\% & 137 & - & - & - \\
    & BNN-VI & 0.66 & 39.5\% & 28 & - & - & -\\
    & BNN-Lasso & 0.63 & 40.9\% & 42 & - & - & -\\
    & BNN-MC-Dropout & 0.67 & 32.3\% & 12 & - & - & -\\    
    & BNN-SGHMC & 0.53 & 51.3\% & 505 & (111, 837, 1538) & 0.219 & 1.00 \\
    & BNN-SGHMC(first 200) & 2.75 & 54.6\% & 61 & (13, 111, 150) & 0.213 & 0.91 \\  
    & SWAG & 0.53 & 48.2\% & 97 & (98, 1021, 1489) & 1.010 & 4.39\\
    & BNN-RNS-HMC & 0.62 & 44.7\% & 38 & (107, 925, 1520) & 2.815 & 12.24\\
    & BNN-pCN & 0.65 & 51.1\% & 620 & (99, 1003, 1532) & 0.159 & 0.69 \\
    & FBNN (pCN-SGHMC) & 0.52 & 32.2\% & 68 & (91, 912, 1533) & 1.338 & 5.95 \\   
    & FBNN (pCN-pCN) & 0.65 & 24.5\% & 67 & (105, 1087, 1540) & 1.567 & 6.86 \\  
    & FBNN (SGHMC-SGHMC) & 0.50 & 40.0\% & 70 & (77, 806, 1536) & 1.100 & 4.78 \\
    & FBNN (SGHMC-pCN) & 0.52 & 48.1\% & 57 & (92, 897, 1536) & 1.614 & 7.33 \\    
    \hline
    Boston & DNN & 3.21 & -  & 14 & - & - & - \\
    Housing & Ensemble-DNN & 3.17 & 72.1\% & 74 & - & - & - \\
    & BNN-VI & 7.60 & 83.7\% & 85 & - & - & -\\
    & BNN-Lasso & 6.20 & 79.2\% & 68 & - & - & -\\
    & BNN-MC-Dropout & 10.12 & 81.2\% & 91 & - & - & -\\    
    & BNN-SGHMC & 3.83 & 75.3\% & 888 & (76, 649, 1536) & 0.085 & 1.00 \\
    & BNN-SGHMC(first 200) & 7.40 & 66.9\% & 86 & (9, 87, 150) & 0.104 & 1.22 \\ 
    & SWAG & 5.81 & 71.9\% & 104 & (68, 724, 1532) & 0.653 & 7.22\\
    & BNN-RNS-HMC & 9.42 & 73.4\% & 76 & (73, 1032, 1504) & 0.960 & 10.6\\
    & BNN-pCN & 3.25 & 79.3\% & 901 & (76, 649, 1536) & 0.084 & 0.88 \\       
    & FBNN (pCN-SGHMC) & 4.16 & 41.7\% & 186 & (71, 965, 1543) & 0.381 & 4.22 \\
    & FBNN (pCN-pCN) & 3.81 & 47.1\% & 186 & (80, 966, 1541) & 0.430 & 4.78 \\ 
    & FBNN (SGHMC-SGHMC) & 4.15 & 48.9\% & 94 & (69, 979, 1542) & 0.734 & 8.22 \\
    & FBNN (SGHMC-pCN) & 3.82 & 81.1\% & 91 & (93, 938, 1543) & 1.021 & 11.94 \\
    \hline
    Alzheimer & DNN & 0.49 & - & 326 & - & - & - \\
    Dataset & Ensemble-DNN & 0.42 & 89.3\% & 1597 & - & - & -\\
    & BNN-VI & 0.53 & 87.6\% & 341 & - & - & -\\
    & BNN-Lasso & 0.52 & 83.5\% & 561 & - & - & -\\
    & BNN-MC-Dropout & 0.60 & 92.8\% & 268 & - & - & -\\    
    & BNN-SGHMC & 0.49 & 91.6\% & 6524 & (102, 973, 1376) & 0.015 & 1.00 \\
    & BNN-SGHMC(first 200) & 3.17 & 72.7\% & 641 & (7, 82, 150) & 0.011 & 0.69 \\ 
    & SWAG & 0.74 & 89.3\% & 1214 & (106, 1002, 1542) & 0.087 & 5.58\\
    & BNN-RNS-HMC & 0.61 & 92.4\% & 7324 & (96, 892, 1531) & 0.013 & 0.83\\
    & BNN-pCN & 0.51 & 89.9\% & 6212 & (120, 1092, 1448) & 0.019 & 1.23 \\       
    & FBNN (pCN-SGHMC) & 0.50 & 90.2\% & 643 & (116, 994, 1504) & 0.180 & 11.53 \\
    & FBNN (pCN-pCN) & 0.56 & 91.4\% & 682 & (108, 998, 1498) & 0.171 & 10.93 \\ 
    & FBNN (SGHMC-SGHMC) & 0.55 & 88.4\% & 671 & (118, 1012, 1541) & 0.176 & 11.24 \\
    & FBNN (SGHMC-pCN) & 0.48 & 91.6\% & 632 & (149, 984, 1497) & 0.218 & 13.97 \\
    \hline
    YearPredictionMSD & DNN & 74.54 & - & 1569 & - & - & - \\
    Dataset & Ensemble-DNN & 72.89 & 90.47\% & 7929 & - & - & -\\
    & BNN-VI & 78.21 & 88.43\% & 2146 & - & - & -\\
    & BNN-Lasso & 79.44 & 89.04\% & 3243 & - & - & -\\
    & BNN-MC-Dropout & 83.08 & 87.89\% & 1287 & - & - & -\\    
    & BNN-SGHMC & 81.67 & 83.81\% & 25533 & (122, 1005, 1540) & 0.004 & 1.00 \\
    & BNN-SGHMC(first 200) & 105.92 & 94.13\% & 2613 & (7, 84, 149) & 0.002 & 0.56 \\ 
    & SWAG & 93.87 & 86.33\% & 3841 & (113, 987, 1537) & 0.029 & 7.35 \\
    & BNN-RNS-HMC & 92.82 & 80.19\% & 17289 & (109, 925, 1563) & 0.006 & 1.57 \\
    & BNN-pCN & 82.45 & 85.74\% & 25655 & (124, 873, 1631) & 0.004 & 1.20 \\       
    & FBNN (pCN-SGHMC) & 74.92 & 88.73\% & 2815 & (143, 1049, 1676) & 0.050 & 10.63\\
    & FBNN (pCN-pCN) & 74.03 & 90.45\% & 3046 & (138, 992, 1618) & 0.045 & 9.48\\ 
    & FBNN (SGHMC-SGHMC) & 76.69 & 90.40\% & 2974 & (146, 973, 1599) & 0.049 & 10.27\\
    & FBNN (SGHMC-pCN) & 73.41 & 92.23\% & 2724 & (156, 934, 1608) & 0.057 & 11.98\\
  \end{tabular}
\label{tab:performance-regression} 
\end{table*}

\newpage

\begin{table*}[h!]
\caption{Performance of various deep learning methods based on classification problems.} 
\fontsize{7pt}{7pt}\selectfont
  \centering
  \begin{tabular}{lccccccc}
    \textbf{Dataset} & \textbf{Method} & \textbf{Acc} & \textbf{Time(s)} & \textbf{ESS(min,med,max)} & \textbf{minESS/s} & \textbf{spdup}  &  \textbf{ECE} \\
    \hline
    Simulated  & DNN & 96\% & 4257 & - & - & - & -\\
    Dataset & Ensemble-DNN & 97\% & 17415 & - & - & - & 0.382\\
    & BNN-VI & 90\% & 4275 & - & - & - & 0.399\\
    & BNN-Lasso & 92\% & 3189 & - & - & - & 0.363\\
    & BNN-MC-Dropout & 87\% & 2912 & - & - & - & 0.277\\
    & BNN-SGHMC & 95\% & 43841 & (47, 212, 1459) & 0.001 & 1.00 & 0.471\\
    & BNN-SGHMC(first 200) & 83\% & 4218 & (21, 59, 156) & 0.004 & 4.64 & 0.498\\
    & SWAG & 81\% & 4731 & (81, 773, 1368) & 0.017 & 15.97 & 0.482\\
    & BNN-RNS-HMC & 69\% & 1309 & (135, 1190, 1493) & 0.103 & 96.20 & 0.513\\
    & BNN-pCN & 91\% & 49774 & (36, 207, 1417) & 0.001 & 0.67 & 0.475\\
    & FBNN (pCN-SGHMC) & 93\% & 5179 & (134, 959, 1419) & 0.051 & 24.13 & 0.409\\
    & FBNN (pCN-pCN) & 91\% & 4858 & (146, 921, 1412) & 0.058 & 28.03 & 0.423\\
    & FBNN (SGHMC-SGHMC) & 95\% & 4517 & (149, 891, 1540) & 0.032 & 30.76 & 0.406\\
    & FBNN (SGHMC-pCN) & 96\% & 4489 & (154, 911, 1602) & 0.070 & 32.00 & 0.396\\

    \hline
    Adult  & DNN & 85\% & 426 & - & - & - & -\\
    Dataset & Ensemble-DNN & 84\% & 2153 & - & - & - & 0.556\\
    & BNN-VI & 80\% & 562 & - & - & - & 0.642\\
    & BNN-Lasso & 83\% & 256 & - & - & - & 0.631\\
    & BNN-MC-Dropout & 82\% & 187 & - & - & - & 0.540\\
    & BNN-SGHMC & 83\% & 5979 & (16, 202, 1520) & 0.002 & 1.00 & 0.574\\  
    & BNN-SGHMC(first 200) & 78\% & 581 & (1, 41, 148) & 0.002 & 0.95 & 0.594\\
    & SWAG & 79\% & 1641 & (47, 912, 1532) & 0.028 & 10.70 & 0.668\\
    & BNN-RNS-HMC & 72\% & 6110 & (89, 960, 1530) & 0.014 & 5.44 & 0.658\\
    & BNN-pCN & 83\% & 6227 & (9, 117, 1518) & 0.001 & 0.54 & 0.616\\
    & FBNN (pCN-SGHMC) & 82\% & 642 & (87, 892, 1539) & 0.135 & 50.63 & 0.580\\  
    & FBNN (pCN-pCN) & 82\% & 639 & (88, 890, 1540) & 0.137 & 51.46 & 0.592\\
    & FBNN (SGHMC-SGHMC) & 83\% & 612 & (68, 941, 1541) & 0.111 & 41.52 & 0.583\\
    & FBNN (SGHMC-pCN) & 84\% & 609 & (89, 875, 1539) & 0.146 & 54.91 & 0.576\\
    \hline
    Alzheimer  & DNN & 82\% & 51 & - & - & - & - \\
    Dataset & Ensemble-DNN & 83\% & 262 & - & - & - & 0.542\\
    & BNN-VI & 72\% & 61 & - & - & - & 0.546\\
    & BNN-Lasso & 76\% & 256 & - & - & - & 0.524\\
    & BNN-MC-Dropout & 76\% & 12 & - & - & - & 0.429\\
    & BNN-SGHMC & 81\% & 2736 & (81, 588, 1526) & 0.029 & 1.00 & 0.499\\  
    & BNN-SGHMC(first 200) & 69\% & 282 & (8, 84, 149) & 0.028 & 0.96 & 0.523\\
    & SWAG & 81\% & 312 & (72, 913, 1562) & 0.231 & 7.69 & 0.508\\
    & BNN-RNS-HMC & 58\% & 293 & (84, 915, 1540) & 0.286 & 9.55 & 0.521\\
    & BNN-pCN & 73\% & 2660 & (71, 424, 1534) & 0.026 & 0.90 & 0.469\\
    & FBNN (pCN-SGHMC) & 76\% & 277 & (76, 947, 1542) & 0.274 & 9.26 & 0.568\\  
    & FBNN (pCN-pCN) & 77\% & 274 & (70, 931, 1542) & 0.255 & 8.33 & 0.377\\
    & FBNN (SGHMC-SGHMC) & 80\% & 278 & (81, 973, 1538) & 0.291 & 8.63 & 0.448\\
    & FBNN (SGHMC-pCN) & 84\% & 280 & (92, 914, 1535) & 0.328 & 11.09 & 0.376\\
    \hline
    Mnist  & DNN & 92\% & 231 & - & - & - & - \\
    Dataset & Ensemble-DNN & 94\% & 1129 & - & - & - & 0.312\\
    & BNN-VI & 86\% & 273 & - & - & - & 0.417\\
    & BNN-Lasso & 87\% & 184 & - & - & - & 0.445\\
    & BNN-MC-Dropout & 89\% & 212 & - & - & - & 0.328\\
    & BNN-SGHMC & 90\% & 8641 & (15, 364, 1456) & 0.001 & 1.00 & 0.280\\  
    & BNN-SGHMC(first 200) & 78\% & 916 & (1, 34, 149) & 0.001 & 0.62 & 0.271\\
    & SWAG & 83\% & 1294 & (15, 431, 1376) & 0.011 & 7.72 & 0.327\\
    & BNN-RNS-HMC & 69\% & 4541 & (14, 372, 1394) & 0.003 & 2.05 & 0.349\\
    & BNN-pCN & 88\% & 8912 & (17, 398, 1471) & 0.001 & 1.26 & 0.321\\
    & FBNN (pCN-SGHMC) & 88\% & 927 & (15, 412, 1383) & 0.016 & 10.78 & 0.352\\  
    & FBNN (pCN-pCN) & 90\% & 931 & (16, 393, 1421) & 0.017 & 11.45 & 0.312\\
    & FBNN (SGHMC-SGHMC) & 91\% & 923 & (18, 409, 1461) & 0.019 & 13.01 & 0.283\\
    & FBNN (SGHMC-pCN) & 94\% & 914 & (23, 474, 1521) & 0.025 & 16.77 & 0.241\\
    \hline
    celebA  & DNN & 80\% & 3689 & - & - & - & - \\
    Dataset & Ensemble-DNN & 82\% & 15445 & - & - & - & 0.569\\
    & BNN-VI & 80\% & 1132 & - & - & - & 0.622\\
    & BNN-Lasso & 79\% & 2159 & - & - & - & 0.561\\
    & BNN-MC-Dropout & 78\% & 1641 & - & - & - & 0.512\\
    & BNN-SGHMC & 81\% & 17234 & (19, 383, 1537) & 0.001 & 1.00 & 0.567 \\  
    & BNN-SGHMC(first 200) & 69\% & 1849 & (1, 85, 149) & 0.001 & 0.63 & 0.642 \\
    & SWAG & 70\% & 8913 & (85, 1014, 1467) & 0.009 & 8.65 & 0.534\\
    & BNN-RNS-HMC & 72\% & 1835 & (72, 951, 1494) & 0.039 & 35.59 & 0.612\\
    & BNN-pCN & 76\% & 19676 & (19, 331, 1538) & 0.001 & 0.95 & 0.529\\
    & FBNN (pCN-SGHMC) & 80\% & 1972 & (99, 1155, 1542) & 0.050 & 45.53 & 0.565\\  
    & FBNN (pCN-pCN) & 79\% & 1951 & (116, 1155, 1542) & 0.059 & 53.93 & 0.542\\
    & FBNN (SGHMC-SGHMC) & 81\% & 1904 & (93, 978, 1544) & 0.048 & 44.30 & 0.568\\
    & FBNN (SGHMC-pCN) & 85\% & 1892 & (129, 785, 1517) & 0.068 & 61.84 & 0.493\\
    \hline
    SVHN  & DNN & 96\% & 3748 & - & - & - & - \\
    Dataset & Ensemble-DNN & 97\% & 17415 & - & - & - & 0.382\\
    & BNN-VI & 90\% & 4275 & - & - & - & 0.399\\
    & BNN-Lasso & 92\% & 3189 & - & - & - & 0.363\\
    & BNN-MC-Dropout & 87\% & 2912 & - & - & - & 0.277\\
    & BNN-SGHMC & 91\% & 18639 & (31, 379, 1528) & 0.001 & 1.00 & 0.221\\  
    & BNN-SGHMC(first 200) & 74\% & 2515 & (5, 67, 262) & 0.001 & 1.19 & 0.246\\
    & SWAG & 83\% & 6294 & (78, 499, 1383) & 0.012 & 7.45 & 0.282\\
    & BNN-RNS-HMC & 71\% & 9083 & (19, 410, 1384) & 0.002 & 1.25 & 0.299\\
    & BNN-pCN & 93\% & 17812 & (27, 398, 1317) & 0.001 & 0.91 & 0.275\\
    & FBNN (pCN-SGHMC) & 93\% & 1931 & (36, 391, 1403) & 0.018 & 11.21 & 0.263\\  
    & FBNN (pCN-pCN) & 91\% & 1927 & (29, 372, 1280) & 0.015 & 9.04 & 0.318\\
    & FBNN (SGHMC-SGHMC) & 96\% & 1891 & (34, 412, 1464) & 0.017 & 10.81 & 0.248\\
    & FBNN (SGHMC-pCN) & 96\% & 1884 & (47, 486, 1533) & 0.024 & 14.99 & 0.223\\

  \end{tabular} 
  \label{tab:performance-classification}
\end{table*}

\end{document}